\title{Incentive Stackelberg 
	Mean-payoff Games}
\begin{document}
	\author{
		Anshul Gupta\inst{1} \and 
		M. S. Krishna Deepak \inst{2} \and
		Bharath Kumar Padarthi\inst{2} \and \\
		Sven Schewe\inst{1} \and Ashutosh Trivedi\inst{2}} 
	
	\institute{
	  University of Liverpool, UK
          \and 
          Indian Institute of Technology Bombay, India
	}
	\maketitle
	
\begin{abstract}
	We introduce and study \emph{incentive equilibria} for
	multi-player mean-payoff games. 
	Incentive equilibria generalise well-studied solution concepts such as Nash
	equilibria and leader equilibria (also known as Stackelberg equilibria). 
	Recall that a strategy profile is a Nash equilibrium if no player can
	improve his payoff by changing his strategy unilaterally. 
	In the setting of incentive and leader equilibria, there is a distinguished
	player---called the leader---who can assign strategies to all other players,
	referred to as her followers.  
	A strategy profile is a leader strategy profile if no player, except for the leader, can
	improve his payoff by changing his strategy unilaterally, and a leader equilibrium is a leader strategy profile with a maximal return for the leader.  
	In the proposed case of incentive equilibria, the leader can additionally
	influence the behaviour of her followers by transferring parts of her payoff to
	her followers. 
	The ability to incentivise her followers provides the leader with more freedom
	in selecting strategy profiles, and we show that this can indeed improve the
	leader's payoff in such games. 
	The key fundamental result of the paper is the existence of incentive equilibria
	in mean-payoff games.   
	We further show that the decision problem related to constructing incentive
	equilibria is NP-complete. 
	On a positive note, we show that, when the number of players is fixed, the
	complexity of the problem falls in the same class as two-player mean-payoff
	games.  
	We also present an implementation of the proposed algorithms, and discuss
	experimental results that demonstrate the feasibility of the analysis of medium
	sized games.   
\end{abstract}

\section{Introduction}
The classical mean-payoff games~\cite{Zwick+Paterson/96/payoff,DBLP:conf/lics/ChatterjeeHJ05} are
two-player zero-sum games that are played on weighted finite di-graphs, where two
players---Max and Min---take turn to move a token along the edges of the graph
to jointly construct an infinite play.
The objectives of the players Max and Min are to respectively maximise
and minimise the limit average reward associated with the play.  
Mean-payoff games enjoy a special status in verification, since $\mu$-calculus
model checking and parity games can be reduced in polynomial-time to solving
mean-payoff games.   
Mean-payoff objectives can also be considered as quantitative
extensions~\cite{Henzinger13,6940380} of 
classical B\"uchi objectives, where we are interested in the limit-average share of
occurrences of accepting states rather than merely in whether or not infinitely
many accepting states occur.  
For a broader discussion on quantitative verification, in general, and the transition
from the classical qualitative to the modern quantitative interpretation of
deterministic B\"uchi automata, we refer the reader to Henzinger's survey
on quantiative reactive modelling and verification \cite{Henzinger13}. 
We focus on multi-player extension of mean-payoff games where a
finite number of players control various vertices and 
move a token along the edges to collectively produce an infinite run. 
There is a player-specific reward function that, for every edge of the graph,
gives an immediate reward to each player. 
The payoff to a player associated with a play is the limit average of
the rewards in the individual moves.
The most natural question related to the multi-player game setting is to find an
optimal `stable' \emph{strategy profile} (a set of strategies, one for each
player). 
Broadly speaking, a strategy profile is stable, if no player has an incentive to
deviate from it. 
Nash equilibria~\cite{Nash01011950} and leader 
equilibria~\cite{von1934marktform,6940380} (also known as \emph{Stackelberg
	equilibria}) are the most common notions of stable strategy profiles for
multi-player games.
A strategy profile is called a Nash equilibrium if no player
can improve his payoff by unilaterally changing his strategy. 
In a setting where we have a distinguished player (called the leader) who is
able to suggest a strategy profile to other players (called followers), a
strategy profile is stable if no follower can improve his payoff by
unilaterally deviating from the profile. 
A leader equilibrium is a stable strategy profile that maximises the reward
for the leader.

\begin{wrapfigure}[7]{r}{8cm} 
	\vspace{-8mm}
	\begin{tikzpicture}[->,>=stealth',shorten >=1pt]
	\tikzstyle{vertex}=[circle,fill=black!10,minimum size=17pt,inner sep=0pt,font=\sffamily\small\bfseries]
	\tikzstyle{rvertex}=[circle,fill=red!10,minimum size=17pt,inner sep=0pt,font=\sffamily\small\bfseries]
	\tikzstyle{gvertex}=[circle,fill=green!10,minimum size=17pt,inner sep=0pt,font=\sffamily\small\bfseries]
	\node (6) at (0.25,1) {};
	\node (1) at (1,1) [gvertex,draw]{$1$} ; 
	\node (2) at (3.5,1) [rvertex,draw] {$2$};
	\node (4) at (1,0) [vertex,draw] {$4$};
	\node (3) at (6,1) [vertex,draw] {$3$};
	\node (5) at (3.5,0) [vertex,draw] {$5$};
	
	\path[every node/.style={font=\sffamily\small}]
	(6) edge [right] node[] {} (1)
	(1) edge [right] node[] {} (2)       
	edge [below] node[below] {} (4)
	(2) edge [below] node[below] {} (5)
	edge [right] node[] {} (3)
	(3) edge [loop right] node[] {$(0,9,-9)$} (3)        
	(4) edge [loop right] node[] {$(1,0,-1)$} (4)
	(5) edge [loop right] node[] {$(1,1,-2)$} (5);
	
	\end{tikzpicture}
	\caption{Incentive equilibrium beats leader equilibrium beats Nash equilibrium.}
	\label{fig:incentiveisbetter}
\end{wrapfigure}


In this paper, we introduce and study a novel notion of stable strategy profiles for
multi-player mean-payoff games that we call incentive Stackelberg equilibria (or
incentive equilibria for short). 
In this setting, the leader has more powerful strategies, where she not only puts
forward strategies that describe how the players move, but also gives non-negative
incentives to the followers for compliance. 
These incentives are then added to the overall rewards the respective follower would receive
in each move of the play, and deduced from the overall reward of the leader. 
Like for leader equilibria, a strategy profile is stable if no \emph{follower}
has an incentive to deviate. 
An \emph{incentive equilibrium} is a stable strategy profile with maximal reward
for the leader. 

Using incentive equilibria has various natural justifications. 
The techniques we discussed here can be applied where distributed development of a system is
considered. That is, when several rational components interact among themselves along
with a rational controller and they try to optimise their individual
objectives and specifications. 
Our techniques can be applied to maximise utility of a
central controller while also complying with individual component
specifications. 
Transferring utilities is also quite natural where the payoffs on the edges
directly translate to the gains incurred by individual components. 
These techniques can also be used to maximise social optima where rational
controller follow the objective of maximising joint utility.  

We now discuss two simple examples that exemplify the role that incentives can
play to achieve good stable solutions of multi-player mean-payoff games.


\begin{example}
	Consider the multi-player mean-payoff game shown in Figure~\ref{fig:incentiveisbetter}.
	Here we have three players: Player~1, Player~2 (the leader), and Player~3. 
	The vertex labelled~$1$ is controlled by Player~1, while the vertex labelled $2$
	is controlled by Player~$2$. 
	All other vertices are controlled by Player~$3$. 
	We further annotate the rewards of various players on the edges of the graph
	by giving a triple, where the reward of the players~$1$, $2$, and~$3$ are
	shown in that order. 
	We omit the labels when the rewards of all players are~$0$.
	An incentive equilibrium would be given by (a strategy profile leading to) the play
	$\seq{1, 2, 3^\omega}$, where the leader pays
	an incentive of $1$ to Player $1$ for each step and $0$ to Player~$3$.
	By doing this, she secures a payoff of $8$ for herself. 
	The reward for the players~$1$ and~$3$ in this incentive equilibrium are each $1$ and $-9$, respectively.
	A leader equilibrium would result in the play $\seq{1, 2, 5^\omega}$ (with rewards) of $1$ for Player 1 and the leader and $-2$ for Player~3: 
	when the leader cannot pay any incentive to Player $1$, then the move from Vertex $2$ to Vertex $3$ will not be part of a stable strategy. 
	The only Nash equilibrium in this game would result in the play $\seq{1, 4^\omega}$ with the rewards of $1$ for Player 1, $0$ for the leader, and $-1$ for Player~3.  
	This example therefore shows how the leader can benefit from her additional choices in leader and incentive equilibria.
	\qed
\end{example}

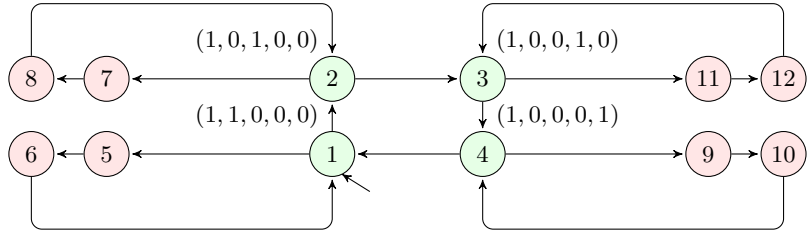
\begin{figure*}[t] 
	\begin{center}
		\begin{tikzpicture}[->,>=stealth',shorten >=1pt,rounded corners]
		\tikzstyle{rvertex}=[draw, circle,fill=red!10,minimum size=17pt,inner sep=0pt,font=\sffamily\small\bfseries]
		\tikzstyle{gvertex}=[draw, circle,fill=green!10,minimum size=17pt,inner sep=0pt,font=\sffamily\small\bfseries]
		
		\node[gvertex] (1) at (4, 1) {$1$};
		\node[gvertex] (2) at (4, 2) {$2$};
		\node[gvertex] (4) at (6, 1) {$4$};
		\node[gvertex] (3) at (6,2) {$3$};
		
		\node[rvertex] (5) at (1, 1) {$5$};
		\node[rvertex] (6) at (0, 1) {$6$};
		
		\node[rvertex] (7) at (1, 2) {$7$};
		\node[rvertex] (8) at (0, 2) {$8$};
		
		\node[rvertex] (9) at (9, 1) {$9$};
		\node[rvertex] (10) at (10, 1) {$10$};
		
		\node[rvertex] (11) at (9, 2) {$11$};
		\node[rvertex] (12) at (10, 2) {$12$};

		\draw[->] (1) -- (2);
		\draw[->] (2) -- (3);
		\draw[->] (3) -- (4);
		\draw[->] (4) -- (1);
		\draw[->] (1) -- (5);
		\draw[->] (5) -- (6);
		\draw[->] (6) -- (0, 0) -- (4, 0) -- (1);;
		\draw[->] (2) -- (7);
		\draw[->] (7) -- (8);
		\draw[->] (8) -- (0, 3) -- (4, 3) -- (2);;
		\draw[->] (3) -- (11);
		\draw[->] (11) -- (12);
		\draw[->] (12)  -- (10, 3) -- (6, 3) -- (3);;
		\draw[->] (4) -- (9);
		\draw[->] (9) -- (10);
		\draw[->] (10)  -- (10, 0) -- (6, 0) -- (4);;
		
		\draw[->] (4.5, 0.5) -- (4.1, 0.75);
		
		\node at (3, 1.5) {$(1, 1, 0, 0, 0)$};
		\node at (3, 2.5) {$(1, 0, 1, 0, 0)$};
		\node at (7, 1.5) {$(1, 0, 0, 0, 1)$};
		\node at (7, 2.5) {$(1, 0, 0, 1, 0)$};
		
		\end{tikzpicture}
	\end{center}
	\caption{Incentive equilibrium gives much better system utilisation.}
	\label{fig:intro-token}
\end{figure*}

\begin{example}
	Consider the multi-player mean-payoff game shown in the
	Figure~\ref{fig:intro-token} with five players---Player~1 (or: leader) and 
	Player~2 to 5 (followers). 
	For $i \in \{2, 3, 4, 5\}$, Player~$i$ controls the vertex labelled~$i$ in the game and
	gets a reward of $1$ whenever token is at vertex~$i$. (To keep the rewards on the edges, one could encode this by giving this reward whenever vertex $i$ is entered.)
	Player $1$ gets a reward of $1$ in all of these vertices.
	The payoff of all other players is $0$ in all other cases.  
	Notice that the only play defined by Nash or leader equilibria in this example is 
	$\seq{(1, 5, 6)^\omega}$, which provides a payoff of $\frac{1}{3}$ to Player $1$
	and Player~$2$, and a payoff of~$0$ to all other players. 
	For incentive equilibria, however, the leader can give
	an incentive of $\frac{1}{12}$ to all followers when they follow the play
	$\seq{(1, 2, 3, 4)^\omega}$. It is easy to see that such a strategy profile is incentive stable.
	The leader will then receive a payoff
	of $\frac{2}{3}$, i.e., her payoff from the cycle, $1$, minus the incentives given to
	the other players, $4\cdot \frac{1}{12}$.
	All other players receive a payoff of~$\frac{1}{3}$, consisting of the payoff from the cycle, $\frac{1}{4}$, plus the incentive they receive from the leader,
	$\frac{1}{12}$.   
	Notice that this payoff is not only better from the leader's point-of-view,
	the other players are also better off in this equilibrium.
	\qed
\end{example}

In both examples, we saw that the incentive equilibria are strictly better than
Nash and leader equilibria.  
It is not a coincidence---note that leader reward from any Nash equilibrium cannot be greater than her
reward from any leader equilibrium, as in the case of leader strategy profiles,
leader can select from a wider range of strategy profiles.  
Thus, if compared to a Nash equilibrium, a leader equilibrium can only be superior
w.r.t.\ the leader reward. 
Similarly, a leader equilibrium cannot beat an incentive equilibrium, as here also,
leader can select from a wider range of strategy profiles (`leader stable'
strategy profiles can be viewed as an `incentive stable' strategy profiles with
$0$ incentives). 
It again implies that leader reward from any leader equilibrium
cannot be greater than her reward from any incentive equilibrium. 

\emph{\textbf{Related Work.}}
Ummels and Wojtczak~\cite{DBLP:conf/concur/UmmelsW11,DBLP:conf/fossacs/Ummels08} considered Nash equilibria
for mean-payoff games and showed that the decision problem of finding a Nash
equilibria is NP-complete for pure (not allowing randomisation) strategy
profiles, while the problem is undecidable for arbitrary randomised strategies. 
Gupta and Schewe~\cite{6940380} have extended these results to leader equilibria.
The undecidability result of~\cite{DBLP:conf/concur/UmmelsW11} for
Nash-equilibria in arbitrary randomised strategies can be easily extended to
leader equilibria.
For this reason, we focus on non-randomised strategies throughout
this paper. 

Leader equilibria were introduced by von Stackelberg~\cite{von1934marktform} and
were further studied in~\cite{friedman1977oligopoly}.   
The strategy profiles we study here are inspired from
\cite{Friedman71Noncooperative} and are studied in detail for infinite games in
\cite{6940380}.  
Incentive equilibria have recently been introduced for bi-matrix
games~\cite{GS/14/bimatrix}, but have, to the best of our knowledge, not been
used in infinite games. 
Two-player mean-payoff games were first studied in \cite{positional/1979} 
and were shown to be positionally determined. 
They can be solved in pseudo-polynomial time
\cite{Zwick+Paterson/96/payoff,DBLP:journals/fmsd/BrimCDGR11}, smoothed
polynomial time \cite{DBLP:conf/icalp/BorosEFGMM11}, PPAD \cite{DBLP:journals/siamcomp/EtessamiY10}
and randomised subexponential~\cite{BjorklundVorobyov/07/subexp} time. 
Their decision problem is also known to be in
UP$\cap$co-UP~\cite{Jurdzinski/98/UP,Zwick+Paterson/96/payoff}.   

\emph{\textbf{Contributions.}}
The key contribution of the paper is the concept of incentive equilibria to
system analysis in general and to multi-player mean-payoff games in particular.  
We show that the complexity of finding incentive equilibria is same as that for
finding leader equilibria~\cite{6940380} for multi-player mean-payoff
games: it is NP-complete in general, but, for a fixed number of players, it is in the same complexity class as solving two-player mean-payoff games (2MPGs).  
In other words, solving two-player mean-payoff games is the most expensive step
involved. 
We have implemented an efficient version of the optimal strategy improvement
algorithm from~\cite{Schewe/08/improvement} as a backbone, and equipped it with
a logarithmic search to expand it from the qualitative evaluation (finding mean
partitions) of mean-payoff games to their quantitative evaluation. 
We construct incentive equilibria by implementing
a constraint system that gives necessary and sufficient conditions for a
strategy profile to be (1) stable and (2) provide optimal leader return among them.
The evaluation of the constraint system involves evaluating a bounded number of
calls to the linear programming solver.      

The contribution of the paper is therefore two-fold---first to conceptualise
incentive equilibria in multi-player mean-payoff games, and 
second to present a tool deriving optimal return for the leader by evaluating
a number of multi-player games. 

\emph{\textbf{Organisation.}}
We begin the technical presentation by formally introducing incentive equilibria for
multi-player mean-payoff games.
In Section~\ref{sec:exist} we present details related to existence and
construction of incentive equilibrium and declare the complexity of finding
incentive equilibrium. 
In Section~\ref{sec:tooldetails} we discuss details of our
implementation of constructing incentive equilibrium before concluding in Section~\ref{sec:dis}.

\section{Incentive equilibrium}
We introduce the concept of incentive equilibria for multi-player mean-payoff
games. 
These games are played among multiple players on a multi-weighted
finite directed graph arena where a distinguished player, called the leader, is able to put forward a strategy profile (a strategy each for \emph{all} players).
She will follow the strategy she assigned for herself, while all other players, called her followers, will comply with the strategy she suggested, unless they benefit from unilateral deviation.
The leader is further allowed to incentivise the behaviour of her followers by
sharing her payoff with them, in order to make compliance with the strategy she has put forward sufficiently attractive.
This, in turn, may improve the leader’s payoff. 
Before we define incentive equilibra, let us recall a few key
definitions. 
\begin{definition}[Multi-player Mean-Payoff Game Arena]
	A  multi-player mean-payoff game (MMPG) arena $\mathcal{G}$ is a tuple $(P,
	V, (V_p)_{p \in P}, v_0, E, (r_p)_{p \in P})$ where
	\begin{itemize}
		\item 
		$P$ is a finite set of players with a distinguished leader player $l
		\in P$,
		\item
		$V$ is a finite set of vertices with a distinguished initial vertex $v_0 \in
		V$, 
		\item
		$(V_p)_{p \in P}$ is a partition of $V$ characterising vertices controlled
		by players, 
		\item
		$E \subseteq V \times V$ is a set edges s.t. for all $v \in V$
		there is $v' \in V$ with $(v,v'){\in} E$, 
		\item
		$(r_p)_{p \in P}$ is a family of reward functions $r_p: E \rightarrow
		\mathbb Q$, that for each player $p \in P$, assigns reward for player $p$
		associated with that edge. 
	\end{itemize}
\end{definition}
A finite play $\pi = \seq{v_0, v_1, \ldots, v_n}$ of the game $\mathcal{G}$ is
a sequence of vertices such that $v_0$ is the initial vertex, and for every $0
\leq i < n$, we have, $(v_i, v_{i+1}) \in E$. 
An infinite play is defined in an analogous manner. 
A multi-player mean-payoff game is played on a game arena $\mathcal{G}$
among various players by moving a token along the edges of the arena. 
The game begins by placing a token on the initial vertex.
Each time the token is on the vertex controlled by a player $p \in P$, the
player $p$ chooses an outgoing edge and moves the token along this edge. 
The game continues in this fashion forever, and the players thus construct an
infinite play of the game. 
The (raw) payoff $r_p(\pi)$ of a player $p \in P$ associated with a play $\pi =
\seq{v_0, v_1, \ldots}$ is the limit average reward of the path, given as 
$r_p(\pi)  
\rmdef \liminf_{n\rightarrow \infty} \frac{1}{n}\sum_{i=0}^{n-1}r_p
\big((v_i,v_{i+1})\big)$.
We refer to this value as the raw payoff of the player $p$ to distinguish it
from the payoff for the player that also includes the incentive given to the
player by the leader. 

A strategy of a player is a recipe for the player to choose the successor vertex.
It is given as a function $\sigma_p: V^*V_p \rightarrow V$ such that
$\sigma_p(\pi)$ is defined for a finite play $\langle v_0, \ldots, v_n \rangle$ when $v_n
\in V_p$ and it is such that $(v_n, \sigma_p(\pi)) \in E$.
A family of strategies $\osigma = (\sigma_p)_{p \in P}$ is called a strategy
profile.
Given a strategy profile $\osigma$, we write $\osigma(p)$ for the strategy of
player $p \in P$ in $\osigma$. 
A strategy profile $\osigma$ defines a unique play $\pi_\osigma$, and therefore a
raw payoff $r_p(\osigma) = r_p(\pi_{\osigma})$ for each player $p$.
We write $\Sigma_p^\Gg$ for the set of strategy of player $p \in P$ and $\Pi^\Gg$ for
the set of strategy profiles in a game arena $\Gg$. When the game arena is clear
from the context, we omit it from the superscript. 

For a strategy profile $\osigma$, a player $p \in P$, and a strategy $\sigma'$
of $p$, we write $\osigma_{p,\sigma'}$ for the strategy profile
$\osigma'$ such that $\osigma'(p) = \sigma'$ and $\osigma'(p') = \osigma(p')$ for
all $p' \in P \setminus \set{p}$.
We are now in a position to formally define Nash and leader (aka Stackelberg)
equilibra.

\begin{definition}[Nash Equilibria]
	A strategy profile $\osigma$ is 
	a Nash equilibrium if no player would gain
	from unilateral deviation, i.e., for all players $p \in P$ we have 
	$r_p(\osigma) \geq r_p(\osigma_{p, \sigma'}) \text{ for all } \sigma' \in
	\Sigma_p$. 
\end{definition}

\begin{definition}[Leader Equilibrium]
	A strategy profile $\osigma$ is 
	a leader stratey profile if no player, except for the leader, would gain from
	unilateral deviation, i.e., for all $p \in P{\setminus}\set{l}$ we have  
	$r_p(\osigma) \geq r_p(\osigma_{p, \sigma'})$ for all $\sigma' \in
	\Sigma_p$.    
	A leader equilibrium is a maximal (w.r.t.\ leader's raw payoff) leader strategy
	profile. 
\end{definition}

\paragraph{Incentive Equilibrium.} We next define an \emph{incentive strategy
	profile} as a strategy profile which  satisfies the stability requirements of
the leader equilibria and allows the leader to give incentives to the followers. 
We refer to an optimal strategy profile in this class of strategy profiles that
provides maximal reward to the leader as an \emph{incentive equilibrium}. 

An incentive to a player $p$ is a function $\iota_p {:}  V^*V_p {\rightarrow}
\mathbb R_{\geq 0}$ from the set of histories to incentives.  
Incentives can be extended to infinite play $\pi = \seq{v_0, v_1, \ldots}$ in
the usual mean-payoff fashion:  
$\iota_p(\pi) \rmdef \liminf_{n\rightarrow \infty}
\frac{1}{n}\sum_{i=0}^{n-1}\iota_p(v_0 \ldots v_{n-1})$. 
The overall payoff $\rho_p(\pi)$ to a follower in run $\pi$ is the raw payoff
plus all incentives, $\rho_p(\pi) \rmdef r_p(\pi) + \iota_p(\pi)$, while the
overall payoff of the leader $\rho_l(\pi)$ is her raw payoff after deducting
all incentives, $\rho_l(\pi) \rmdef r_l(\pi) - \sum_{p \in
	P\smallsetminus\{l\}}\iota_p(\pi)$.  

We extend the notion of a strategy profile in the presence of incentives as a pair  
$(\osigma,\ogamma)$, where $\osigma$ is a strategy profile assigned by the
leader, in which the leader pays an incentive given by the incentive profile $\ogamma =
(\iota_p)_{p \in P\smallsetminus\{l\}}$. 
We write $\ogamma_p$ for the incentive for player $p \in P \setminus \set{l}$.  
We write $\ogamma_p(\osigma)$ for the incentive to player $p$ for the unique run
$\pi_\osigma$ under incentive profile $\ogamma$. 
In any incentive strategy profile $(\osigma,\ogamma)$, no player but
the leader may benefit from deviation. 
An optimal strategy profile among this class would form an \emph{incentive
	equilibrium}.
%

\begin{definition}[Incentive Equilibria]
	A strategy profile $(\osigma,\ogamma)$ is an incentive strategy
	profile, if no follower can improve his overall payoff from a unilateral
	deviation, i.e.,
	for all players $p \in P\smallsetminus\{l\}$ we have that 
	$r_p(\osigma) + \ogamma_p(\osigma) \geq
	r_p(\osigma_{p, \sigma'}) + \ogamma_p(\osigma_{p, \sigma'}) \text{   for all $\sigma'
		\in \Sigma_p$}$.
	An incentive profile $(\osigma,\ogamma)$ is an
	\emph{incentive equilibrium} if the leader's total payoff for this profile is maximal among all
	incentive strategy profiles. I.e., for all  
	$(\osigma',\ogamma')$ we have that 
	$ r_l(\osigma) - \sum_{p \in
		P\smallsetminus\{l\}}\ogamma_p(\osigma) \geq r_l(\osigma') - \sum_{p
		\in P\smallsetminus\{l\}}\ogamma'_p(\osigma')$.
	
	For a given $\varepsilon > 0$ we call an incentive strategy profile $(\osigma,
	\ogamma)$ an $\varepsilon$-incentive equilibrium if the leader's payoff is at
	most $\varepsilon$ worse than that of any other profile, i.e., for all 
	profiles $(\sigma',\ogamma')$ we have that 
	$r_l(\osigma) - \sum_{p \in
		P\smallsetminus\{l\}}\ogamma_p(\osigma) \geq r_l(\osigma') - \sum_{p
		\in P\smallsetminus\{l\}}\ogamma'_p(\osigma') - \varepsilon$.
\end{definition}

\paragraph{Incentive equilibria vs.\ leader equilibria.}
We call an incentive strategy profile a \emph{leader strategy profile} if
all incentives are constant $0$ functions, and a \emph{Nash strategy profile} if,
in addition, $\osigma$ is also a Nash equilibrium.
We write SP, ISP, LSP, and Nash SP for the set of strategy profiles,
incentive strategy profiles, leader strategy profiles, and Nash
strategy profiles. 
It is clear that  
$\text{ Nash SP} \subseteq \text{LSP} \subseteq \text{ISP} \subseteq \text{SP}$.
This observation, together with Example~\ref{fig:incentiveisbetter}yield the following result.
\begin{theorem}
	\label{theo:superior}
	Incentive equilibria do not provide smaller return than leader
	equilibria, and leader equilibria do not provide smaller return than
	Nash equilibria.
	Moreover, there are games for which the leader reward from three equilibria
	are different.  
\end{theorem}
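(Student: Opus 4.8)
The plan is to split the statement into its two independent halves: the chain of inequalities relating the three leader rewards, and the existence of a single game witnessing that all three can be \emph{strictly} different. The first half is a pure monotonicity argument resting entirely on the inclusion $\text{Nash SP} \subseteq \text{LSP} \subseteq \text{ISP}$ established just before the theorem, and the second half is discharged by the worked example.

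For the inequality chain, I would first pin down the common objective that all three equilibrium notions maximise, namely the leader's overall payoff $\rho_l(\osigma) = r_l(\osigma) - \sum_{p \in P \setminus \set{l}} \ogamma_p(\osigma)$. The point that must be checked is that this objective is \emph{consistent} across the three classes: whenever a profile is a leader strategy profile or a Nash strategy profile, all incentives are the constant $0$ function, so $\rho_l$ collapses to the raw payoff $r_l$, which is exactly the quantity maximised in the definitions of leader and Nash equilibria. Hence re-interpreting a profile in $\text{LSP}$ as an incentive profile with zero incentives preserves the leader's overall payoff, and likewise for $\text{Nash SP}$ viewed inside $\text{LSP}$. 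Given this alignment, the inequalities follow from monotonicity of the maximum over nested sets: a leader equilibrium maximises $r_l$ over $\text{LSP}$ and every Nash equilibrium lies in $\text{Nash SP} \subseteq \text{LSP}$ with the same leader reward, so the leader's reward in any leader equilibrium is at least that of any Nash equilibrium; similarly an incentive equilibrium maximises $\rho_l$ over $\text{ISP} \supseteq \text{LSP}$, while a leader equilibrium has $\rho_l = r_l$ under the zero-incentive embedding, so the incentive equilibrium's leader reward is at least that of the leader equilibrium.

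For the strict separation I would simply invoke Example~\ref{fig:incentiveisbetter}, where the incentive equilibrium yields leader payoff $8$, the leader equilibrium yields $1$, and the unique Nash equilibrium yields $0$. These three values are pairwise distinct, so the game exhibits the required strict gaps and completes the argument.

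As for the main obstacle, there is no genuinely hard step here; the entire content is bookkeeping around the inclusions. The one place that demands care is the consistency of the maximised objective under the embeddings, i.e.\ confirming that the leader reward being compared does not silently change when a leader or Nash profile is re-read as a zero-incentive profile in the larger class, since otherwise the monotonicity argument would be comparing incommensurable quantities.
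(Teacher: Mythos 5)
Your proposal is correct and follows essentially the same route as the paper, which derives the inequalities from the inclusion chain $\text{Nash SP} \subseteq \text{LSP} \subseteq \text{ISP}$ and obtains the strict separation from Example~1 (Figure~\ref{fig:incentiveisbetter}). Your additional check that the leader's objective is preserved under the zero-incentive embedding is a point the paper leaves implicit, but it is the same argument.
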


\section{Existence and construction}
\label{sec:exist}
This section is dedicated to the existence and construction of incentive
equilibria. 
We first introduce a canonical class of incentive strategy
profiles---the \emph{perfectly-incentivised strategy profiles}
(PSPs)---that corresponds to the Stackelberg version of the  
classic subgame perfection.
Keep in mind that not all perfectly-incentivised strategy profiles (PSPs) are valid
incentive strategy profiles (ISPs). 
On the other hand, we show that every ISP has a corresponding PSP (which is
also an ISP) with the same leader reward.
Thanks to this result, in order to construct incentive equilibrium it suffices
to consider PSPs that are also ISPs. 

Further, we show that, for PSPs that are ISPs, it suffices to find a maximum in
a \emph{well behaved} class of strategy profiles: strategy profiles where every
edge has a limit share of the run---by showing that the supremum of general
strategies cannot be higher than the supremum of these well behaved ones. 
We then show how to construct well behaved PSPs that are ISPs based on a family
of constraint systems that depend on the occuring and recurring vertices on the
play. 
At the same time, we show that no general ISP that defines a play with this set
of occuring and recurrent vertices can have a higher value. 
The set of occuring and recurrent vertices can be guessed and the respective
constraint system can be build and solved in polynomial time, which also
provides inclusion of the related decision problem in NP. 


\subsection{Perfectly-incentivised strategy profiles}

We define a canonical form of an incentive equilibrium with this play that we
call \emph{perfectly-incentivised strategy profiles} (PSP).  
In a PSP, a deviator (a deviating follower) is punished, and the leader
incentivises all other followers to collude against the deviator. 
While the larger set of strategies and plays that define them (when compared to
Nash and leader equilibria) lead to a better value, this incentive scheme
leads to a higher stability: the games are subgame perfect relative to the
leader. 

\begin{definition} [Subgame Perfect]
	A strategy profile $(\osigma, \ogamma)$ is a subgame perfect incentive
	strategy profile, if every reachable subgame is also an incentive strategy profile.
\end{definition}

This term adjusts the classic notion of subgame perfect equilibria to our setting.
Subgame perfection refers to believable threats: broadly speaking, when a player
threatens to play an action that harms herself, then it may happen that the
other players do not believe this player and therefore deviate. 
In a subgame perfect Nash equilibrium, it is therefore required that the subgame
started on each history 
also forms a Nash equilibrium.
Note that the leader 
is allowed to benefit
from deviation in our setting.

The means to obtain subgame perfection after deviation is to make all players harm the most recent deviator. Thus, we essentially resort to a two-player game.
For a multi-player mean-payoff game $\mathcal{G}$, we define, for each follower
$p$, the two-player mean-payoff game (2MPG)  $\mathcal{G}_p$ where $p$ keeps his reward function, while
all other players have the same antagonistic reward $-r_p$.  
Two-player mean-payoff games are memoryless determined, such that every vertex
$v$ has a value, which we denote by $r_p(v)$.
This value clearly defines a minimal payoff of a follower: when he passes by a
vertex in a play, then he cannot expect an outcome below $r_p(v)$, as he would
otherwise deviate.

PSP strategy profiles are in the tradition of reward and punish strategy
profiles \cite{6940380}. In any 'reward and punish' strategy profile, the leader
facilitates the power of all remaining followers to punish a deviator.  
If a player $p$ chooses to deviate from the strategy profile at history $h$, the
game would turn into a two-player game, where all the other followers and the
leader forsake their own interests, and jointly try to `punish' $p$. 
That is, player $p$ may still try to maximise his reward and his objective
remains exactly the same, but the rewards of the rest of the players have changed
to negative of the reward of player $p$. 
As they form a coalition with the joint objective to harm $p$, this is an
ordinary two-player mean-payoff game that starts at the vertex $\mathsf{last}(h)$. 

For a strategy profile $\osigma$ and a history $h$, we call $h$ a deviating
history, if it is not a prefix of $\pi_\osigma$. 
We denote by $\mathsf{dev}(h,\osigma)$ the last player $p$, who has deviated from
his or her strategy $\osigma_p$ on a deviating history $h$. 

\begin{definition}
	[Perfectly-Incentivised Strategy Profile]
	A perfectly-incentivised strategy profile is defined as a strategy profile
	(PSP) $(\osigma,\ogamma)$ with the following properties. 
	For all prefixes $h$ and $h'$ of $\pi_\sigma$ and for all followers $p$, it
	holds that $\ogamma_p(h) = \ogamma_p(h')$. We also refer to this value by
	$\ogamma_p$. 
	For deviator histories $h'$, the incentive $\ogamma_{p}(h')$ is $0$ except for
	the following cases. 
	On every deviating history $h$ with deviating player $p =
	\mathsf{dev}(h,\osigma)$, the player $p'$ who owns the vertex
	$v=\mathsf{last}(h)$ follows the strategy from the 2MPG $\mathcal G_p$. 
	If, under this strategy, player $p'$ selects the successor $v'$ at a vertex $v$ in
	the 2MPG $\mathcal G_p$ (and thus $\osigma_{p'}(h)= v'$), $p'$ is a follower,
	\emph{and} $p' \neq p$, then player $p'$ receives an incentive, such that
	$r_{p'}(v,v') + \ogamma_{p'}(h\cdot v') = r_{\max} +1$. 
\end{definition} 

Note that, technically, the leader punishes herself in this definition. This is
only to keep definitions simple; she is allowed to have an incentive to deviate,
and the subgame perfection does not impose a criterion upon her. 
Note also that a PSP is not necesarily an incentive strategy profile, as it
does not guarantee anything about $\pi_\osigma$. 
The following theorem states the importance of PSPs in constructing incentive
equilibrium. 
\begin{theorem}
	\label{theo:ISP}
	Let $(\osigma,\ogamma)$ be an ISP that defines a play $\pi_{\osigma}$.
	Then we can define a  PSP $(\osigma, \ogamma)$, which is also an ISP, with the same
	reward that defines the same play. 
\end{theorem}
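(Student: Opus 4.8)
The plan is to keep the given profile unchanged along the main play $\pi_\osigma$ and to overwrite its off-play behaviour so that it conforms to the PSP punishment scheme. Concretely, I would leave $\osigma$ as is on every prefix of $\pi_\osigma$, replace the (possibly step-varying) incentives by the constant values $\ogamma_p := \ogamma_p(\osigma)$ for each follower $p$, and on every deviating history $h$ with deviator $p = \mathsf{dev}(h,\osigma)$ let the owner of $\mathsf{last}(h)$ follow the memoryless optimal strategy of the two-player game $\mathcal G_p$, handing each non-deviating follower $p' \neq p$ the incentive that lifts his move reward to $r_{\max}+1$ and paying the deviator nothing. By construction this object satisfies the PSP definition and induces exactly the play $\pi_\osigma$.

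Next I would argue that rewards are preserved. Since the induced play is unchanged, the raw leader payoff $r_l(\osigma)$ is unchanged; since the mean of a constant incentive equals the constant, the new mean incentives coincide with the original $\ogamma_p(\osigma)$; and since $\pi_\osigma$ never visits a deviating history, the large punishment incentives are never actually paid. Hence $r_l(\osigma) - \sum_{p} \ogamma_p(\osigma)$ is exactly the leader payoff of the original profile, and likewise each follower's overall payoff $r_p(\osigma) + \ogamma_p(\osigma)$ is unchanged.

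The substantive step is to show the new profile is still an ISP, i.e.\ that no follower gains from unilateral deviation. Fix a follower $p$ and any deviation $\sigma'$, and let $v$ be the ($p$-owned) vertex on $\pi_\osigma$ at which the resulting play first leaves $\pi_\osigma$. From $v$ onward the construction makes all other players play the coalition-antagonistic strategy of $\mathcal G_p$, so by memoryless determinacy of two-player mean-payoff games the deviator's raw payoff is at most the value $r_p(v)$, and he receives incentive $0$; thus his deviation payoff is at most $r_p(v)$. It therefore suffices to show $r_p(\osigma) + \ogamma_p(\osigma) \geq r_p(v)$. This is exactly where I would invoke the stability of the \emph{original} ISP together with the determinacy fact highlighted in the excerpt: in the original profile $p$ could have deviated at $v$ and then played his $\mathcal G_p$-optimal strategy, which guarantees raw payoff at least $r_p(v)$ against the fixed strategies of the other players in $\osigma$. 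Hence $r_p(\osigma) + \ogamma_p(\osigma) \geq r_p(\osigma_{p,\sigma'}) + \ogamma_p(\osigma_{p,\sigma'}) \geq r_p(v)$, using non-negativity of incentives to drop the middle term. Combining the two inequalities shows the deviation cannot beat the main-play payoff.

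The main obstacle, and the point to treat most carefully, is this last chain: I must ensure that the deviator-faces-antagonists bound (capping the deviation value at $r_p(v)$ in the new profile) and the follower-can-secure bound (forcing $r_p(v)$ below the main-play payoff in the original profile) refer to the same value $r_p(v)$ at the same first-deviation vertex, and that "first deviation on a prefix of $\pi_\osigma$" is well defined for an arbitrary $\sigma'$ (the first differing vertex is necessarily $p$-owned and lies on $\pi_\osigma$). A small additional check is that the punishing followers $p'$ cannot profit on the deviating histories—immediate, since they already receive the maximal attainable per-step value $r_{\max}+1$—which is what makes the threat credible and subgame perfect as the PSP definition demands.
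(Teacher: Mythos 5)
Your proposal is correct and follows essentially the same route as the paper: the paper splits the argument into Lemma~\ref{lem:enough} (any profile whose play satisfies the value constraints $\ogamma_p(\osigma)+r_p(\osigma)\geq r_p(v)$ yields a PSP with the same play and rewards that is an ISP, via the $r_{\max}+1$ punishment incentives and the $r_p(v)$ cap on deviation) and Lemma~\ref{lem:done} (any ISP satisfies those constraints, since otherwise a follower would deviate at $v$), which is exactly the two-step chain you spell out. Your write-up is in fact somewhat more careful than the paper's about identifying the first deviation vertex and about why the two bounds refer to the same value $r_p(v)$.
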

The proof of this theorem follows from Lemma~\ref{lem:enough} and
Lemma~\ref{lem:done}. 

\begin{lemma}
	\label{lem:enough}
	Let $(\osigma',\ogamma')$ be a strategy profile that defines a play
	$\pi_{\osigma'}$, which contains precisely the reachable vertices $Q$. 
	Let  $(\osigma',\ogamma')$ satisfy that, for all followers $p \in P \setminus \set{l}$ and
	all vertices $v \in Q \cap V_p$ owned by $p$ we have that 
	$\ogamma_p(\osigma') + r_p(\osigma') \geq r_p(v)$. 
	Then we can define a  PSP $(\osigma, \ogamma)$ with the same
	reward, which defines the same play. 
\end{lemma}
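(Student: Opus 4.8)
The plan is to keep the on-path behaviour of $(\osigma',\ogamma')$, replace its on-path incentives by their mean values, and bolt on the canonical reward-and-punish apparatus off the path. Concretely, for every prefix $h$ of $\pi_{\osigma'}$ I let $\osigma$ pick exactly the successor prescribed by $\pi_{\osigma'}$, so that $\osigma$ reproduces the play: $\pi_\osigma = \pi_{\osigma'}$, the set of visited vertices is again $Q$, and every raw payoff is unchanged, $r_p(\osigma)=r_p(\osigma')$ for all $p\in P$ including the leader. For the on-path incentives I set the constant $\ogamma_p := \ogamma'_p(\osigma')$ for each follower $p\in P\smallsetminus\{l\}$. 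This is admissible because the mean of the non-negative function $\ogamma'_p$ is itself non-negative, and since a constant function has mean-payoff value equal to its value, it guarantees $\ogamma_p(\osigma)=\ogamma_p=\ogamma'_p(\osigma')$. The first clause of the PSP definition (constant incentive on all prefixes of $\pi_\osigma$) thus holds by construction.

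Off the path I would install precisely the punishment required by the PSP definition. On a deviating history $h$ with most recent deviator $p=\mathsf{dev}(h,\osigma)$, the owner $p'$ of the vertex $v=\mathsf{last}(h)$ follows a memoryless optimal strategy of the two-player game $\mathcal G_p$, which exists because 2MPGs are memoryless determined; if $p'$ is a follower, $p'\neq p$, and this strategy selects the successor $v'$, I set $\ogamma_{p'}(h\cdot v') := r_{\max}+1-r_{p'}(v,v')$, and I set every remaining off-path incentive to $0$. Because $r_{p'}(v,v')\le r_{\max}$ on every edge, this quantity is at least $1$, so all incentives are non-negative and $\ogamma$ is a legitimate incentive profile; and because both the identity of the most recent deviator and the current vertex are read off from $h$ itself, the prescription is consistent even through nested deviations and yields a single well-defined profile $(\osigma,\ogamma)$. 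By inspection this meets the remaining clauses of the PSP definition, so $(\osigma,\ogamma)$ is a PSP.

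It then remains to check that the reward agrees. As the play is literally unchanged, the leader's raw payoff $r_l$ is unchanged, and the only further quantities entering her total payoff are the mean incentives $\ogamma_p(\osigma)=\ogamma_p=\ogamma'_p(\osigma')$; hence $r_l(\osigma)-\sum_{p\in P\smallsetminus\{l\}}\ogamma_p(\osigma) = r_l(\osigma')-\sum_{p\in P\smallsetminus\{l\}}\ogamma'_p(\osigma')$, so the leader's reward is preserved, and likewise each follower's total $r_p(\osigma)+\ogamma_p$ equals $r_p(\osigma')+\ogamma'_p(\osigma')$. In particular the hypothesis $\ogamma'_p(\osigma')+r_p(\osigma')\ge r_p(v)$ for all $v\in Q\cap V_p$ transfers verbatim to $\ogamma_p+r_p(\osigma)\ge r_p(v)$; this is the one place the assumption is used, and it is exactly the domination inequality consumed afterwards by Lemma~\ref{lem:done}.

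I expect the only delicate point to be the off-path bookkeeping: one must argue that letting the owner of each reachable vertex respond with the positional $\mathcal G_p$-strategy for the current $p=\mathsf{dev}(h,\osigma)$ really assembles into one coherent strategy $\osigma$, and that the associated incentives simultaneously satisfy the PSP equation $r_{p'}(v,v')+\ogamma_{p'}(h\cdot v')=r_{\max}+1$ for all followers $p'\neq p$ while staying non-negative. The on-path half is routine once one notes that flattening a non-negative history-dependent incentive to its $\liminf$-mean alters neither its mean-payoff contribution nor its sign.
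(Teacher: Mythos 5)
Your construction is exactly the paper's: reproduce the play $\pi_{\osigma'}$, flatten each follower's on-path incentive to the constant $\ogamma_p'(\osigma')$, and attach the canonical punishment scheme off the path; the reward-preservation argument is also the same. However, there is a genuine gap in what you do with the hypothesis. You observe that the inequality $\ogamma_p + r_p(\osigma) \geq r_p(v)$ ``transfers verbatim'' and then declare it to be ``consumed afterwards by Lemma~\ref{lem:done}.'' That is backwards: Lemma~\ref{lem:done} \emph{derives} this inequality from the assumption that the original profile is an ISP; it does not use it to establish anything about the new profile. The actual job of the hypothesis in Lemma~\ref{lem:enough} is to show that the PSP you have built is itself an incentive strategy profile, i.e.\ that no follower benefits from unilateral deviation --- this is what Theorem~\ref{theo:ISP} needs (its conclusion is a PSP ``which is also an ISP''), and Corollary~\ref{cor:wellBehavedSP} explicitly leans on ``the final contradiction in the proof of Lemma~\ref{lem:enough}.'' Your write-up never carries out that stability check, so as it stands the hypothesis does no work and the lemma cannot serve its role downstream.

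The missing argument is a short case analysis. Suppose some follower $p$ benefits from deviating at a history $h$. If $h$ is already a deviating history, compliance yields a per-step value of $r_{\max}+1$ by the PSP incentive equation, whereas any further deviation triggers (or continues) a punishment game whose outcome is bounded by $r_{\max}$, so there is no benefit. If $h$ is a prefix of $\pi_\osigma$ with $v = \mathsf{last}(h)$ owned by follower $p$, then deviating turns the continuation into the two-player game $\mathcal G_p$ started at $v$, so $p$'s post-deviation payoff is at most $r_p(v)$; by the hypothesis this is at most $\ogamma_p + r_p(\osigma)$, the payoff from compliance --- contradiction. You should add this; everything else in your proposal (the well-definedness of the off-path strategy through nested deviations, the non-negativity of $r_{\max}+1-r_{p'}(v,v')$, and the preservation of all payoffs) is sound and matches the paper.
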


\begin{proof}
	We note that a PSP $(\osigma, \ogamma)$ is fully defined by the
	play $\pi_\osigma$ and the $\ogamma$ restricted to the prefixes of $\pi_\osigma$. 
	We now define the PSP $(\osigma, \ogamma)$ with the following property:
	$\pi_\osigma = \pi_{\osigma'}$, that is the play of the PSP equals the play
	defined by the ISP we started with. 
	For all followers $p$ and all prefixes $h$ of $\pi_\osigma$, we have $\ogamma_p(h)
	= \ogamma_p'(\osigma)$. 
	It is obvious that  $(\osigma',\ogamma')$ and  $(\osigma,
	\ogamma)$ yield the same reward for all followers and  
	the same reward for the leader. We now assume for contradiction that the
	resulting PSP is not an incentive strategy profile. 
	If this is the case, then a follower $p$ must benefit from deviation at some
	history $h$. 
	Let us start with the case that $h$ is a deviator history.
	In this case, the reward for $p$ upon not deviating is $r_{\max}+1$, while it is
	the outcome of some game upon deviation, which is clearly bounded by $r_{\max}$. 
	
	We now turn to the case that $h$ is not a deviator history, and therefore a
	prefix of $\pi_\osigma$. 
	Let $p$ be the owner of $v = \mathsf{last}(h)$.
	If $p$ is the leader, we have nothing to show.
	If $p$ is a follower and does not have an incentive to deviate in $(\osigma, \ogamma)$, we have nothing to show. 
	If $p$ is a follower and has an incentive to deviate in $(\osigma, \ogamma)$, we note that his payoff after deviation would be bounded from above by
	$r_p(v)$. 
	Thus, he does not have an incentive to deviate (contradiction). 
	\qed
\end{proof}

\begin{lemma}
	\label{lem:done}
	Let $(\osigma,\ogamma)$ be an ISP that defines a play
	$\pi_{\osigma}$, which contains precisely the vertices $Q$. 
	Then, for all followers $p \in P \setminus \set{l}$ and all vertices $v \in
	Q \cap V_p$ owned by $p$,  we have that 
	$\ogamma_p(\osigma') + r_p(\osigma') \geq r_p(v)$.
\end{lemma}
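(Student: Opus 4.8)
The plan is to fix a follower $p$ and a vertex $v \in Q \cap V_p$ and to show directly that $\ogamma_p(\osigma) + r_p(\osigma) \geq r_p(v)$; were this to fail, I would exhibit a profitable unilateral deviation for $p$ and contradict the assumption that $(\osigma,\ogamma)$ is an ISP. The engine of the argument is the positional determinacy of the two-player game $\mathcal G_p$: player $p$ owns a memoryless strategy that secures a raw payoff of at least $r_p(v)$ from $v$ against \emph{any} behaviour of the coalition of the remaining players.

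First I would build the deviating strategy $\sigma'$ for $p$. Since $v \in Q$, the vertex $v$ occurs on $\pi_\osigma$, so there is a finite prefix $h$ of $\pi_\osigma$ with $\mathsf{last}(h) = v$. Let $\sigma'$ agree with $\osigma_p$ on all prefixes of $\pi_\osigma$ up to $h$ --- so that, against the unchanged strategies of the other players, the play traces $\pi_\osigma$ and does reach $v$ --- and, from $v$ onward, let $\sigma'$ play the optimal positional strategy of $p$ in $\mathcal G_p$ (defined arbitrarily on the histories that cannot arise). This makes $\sigma'$ a genuine unilateral deviation, so the resulting profile is $\osigma_{p,\sigma'}$.

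Next I would bound $p$'s raw payoff after the deviation. From $v$ onward the other players follow their $\osigma$-strategies, which constitute just one particular strategy of the coalition in $\mathcal G_p$; since $p$ plays his value-achieving strategy there, the limit-average reward accrued from $v$ onward is at least $r_p(v)$. Because the limit-average is a tail property, the finite prefix up to $v$ contributes nothing in the limit, so $r_p(\osigma_{p,\sigma'}) \geq r_p(v)$. Combining this with the stability inequality of the ISP and the non-negativity of incentives yields
\[
\ogamma_p(\osigma) + r_p(\osigma) \;\geq\; r_p(\osigma_{p,\sigma'}) + \ogamma_p(\osigma_{p,\sigma'}) \;\geq\; r_p(\osigma_{p,\sigma'}) \;\geq\; r_p(v),
\]
which is exactly the claim.

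The step I expect to be the main obstacle is justifying that playing the $\mathcal G_p$-optimal strategy really secures $r_p(v)$ \emph{within} the actual deviated play, where the opponents follow $\osigma$ rather than their antagonistic optimum. The point to make carefully is that the 2MPG value is a guarantee against the worst case, so an arbitrary coalition behaviour --- here, the $\osigma$-strategies --- can only help $p$; together with the tail-insensitivity of the limit-average this closes the gap.
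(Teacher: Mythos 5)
Your argument is correct and is exactly the argument the paper intends: its own proof is the one-line observation that if the inequality failed, $p$ would benefit from deviating upon visiting $v$, which is precisely your construction (follow $\osigma_p$ until $v$ is reached, then switch to the value-securing positional strategy of $\mathcal G_p$, use prefix-independence of the mean-payoff and non-negativity of incentives, and contradict the ISP stability condition). Your write-up simply makes explicit the steps the paper leaves implicit.
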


\begin{proof}
	Assume that this is not the case for a follower $p$ and a vertex $v \in Q$ owned
	by $p$. 
	Then $p$ would benefit upon deviating when visiting $v$.
	\qed
\end{proof}

\subsection{Existence and construction of incentive equilibria}
We say that a strategy profile $\osigma$ is \emph{well-behaved} if in the
resulting play $\pi_\osigma$, 
the frequency (ratio)  of occurrence of every edge of the game arena occurs has
a limit, i.e, each edge here occurs with a limit probability (the limes inferior
and superior of the share of its occurrence on $\pi_\osigma$ are equal). 
Such notion of well-behaved strategy profiles were also defined in~\cite{6940380} for the
case of leader equilibria.
We first show how to construct optimal ISPs among well behaved PSPs, and then
show that no ISPs give a better payoff for leader. 

\paragraph{Characterisation of a well-behaved PSPs.}
Let $\osigma$ is a well-behaved perfectly-incentivised strategy profile and let
$Q$ be the set of vertices visited in $\pi_\osigma$ and $S \subseteq Q$ be
the set of vertices that are visited infinitely often (note that $S$ is strongly
connected). 
Let $p_{(s,t)}$ be the limit ratio (frequency) of occurrence of an edge $(s,t) \in
E \cap S \times S$ in $\pi_\osigma$ and let $p_v$ be the for the limit ratio of
each vertex $v \in S$. 

Thanks to the proof of Lemma~\ref{lem:enough}, the following constraint system
(linear program) characterises the necessary and sufficient conditions for the
well-behaved perfectly-incentivised strategy profile $\osigma$ to be an ISP. 
\begin{enumerate}
	\item $p_v =  0$ if $v \in V \smallsetminus S$ and $p_v \geq 0$ if $v \in S$. 
	\item $p_e = 0$ if $e \in E \smallsetminus S \times S$ and $p_e \geq 0$ if $e \in E \cap S \times S$ 
	\item $\sum_{v \in V} p_v = 1$
	\item $p_s = \sum_{(s,t)\in E} p_{(s,t)}$ for all $s \in S$ and $p_t =
	\sum_{(s,t)\in E} p_{(s,t)}$ for all $t \in S$ 
	\item $\ogamma_p + \sum_{e\in E} p_e r_p(e) \geq \max_{v\in Q}(r_{p}(v))$ where
	$r_p(v)$ is the value at vertex $v$ in the  2MPG $\mathcal G_p$ characterising
	minimum payoff expected by player $p$.
\end{enumerate}
The constraints presented above are quite self-explanatory. 
Constraints~$1$ and~$2$ state that the limit ratio of occurrence of a vertex and
edge is positive only when it is visited infinitely often. 
Constraint~$3$ expresses that the sum of ratio of occurrence of vertices is equal
to~$1$, while constraint~$4$ expresses the fact the limit ratio of a vertex should
be equal to limit ratios of all incoming edges, and equal to limit ratio of all
outgoing edges from that vertex. 
The last constraint stems from the proof of Lemma~\ref{lem:enough} combined with
the observation that reward $r_p(\osigma)$ of a player $p$ in $\osigma$ is
simply $\sum_{e\in E}p_e r_p(e)$, that is, it is the weighted sum of the raw
rewards of the individual edges. 
Before we define the objective function, we state a simple corollary from
the proof of Lemma \ref{lem:enough}. 

\begin{corollary}
	\label{cor:wellBehavedSP}
	Every well behaved PSP that is an ISP satisfies these constraints, and every
	well behaved strategy profile $(\osigma, \ogamma)$, whose play
	$\pi_\osigma$ satisfies these constraints, defines a PSP, which is then an ISP. 
\end{corollary}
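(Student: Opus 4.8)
The plan is to prove the two implications of the corollary separately: the necessity direction I would derive from Lemma~\ref{lem:done} together with elementary bookkeeping on limit frequencies, and the sufficiency direction essentially verbatim from Lemma~\ref{lem:enough}. The identity that glues both directions together is that, for a well-behaved profile, the raw payoff of a player is the frequency-weighted average of the edge rewards, $r_p(\osigma) = \sum_{e \in E} p_e\, r_p(e)$; I would establish this first, noting that well-behavedness makes each edge share $p_e$ a genuine limit, so the $\liminf$ in the definition of $r_p$ is an honest limit and the reward accumulated along $\pi_\osigma$ is exactly the sum of edge rewards weighted by their shares.

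For necessity, I would take a well-behaved PSP $\osigma$ that is an ISP and check the five constraints in turn. Constraints~1 and~2 are immediate from the definition of $S$ as the set of infinitely-visited vertices: anything outside $S \times S$ occurs only finitely often and hence has limit share $0$, while the remaining shares are non-negative. Constraint~3 follows because exactly one vertex is occupied per step, so the shares sum to $1$ in the limit, and constraint~4 is Kirchhoff flow conservation---along any prefix the entries and exits of a vertex $v$ differ by at most one, so in the limit $p_v$ equals both the incoming and the outgoing edge-share sums. Constraint~5 is then exactly Lemma~\ref{lem:done}: because $\osigma$ is an ISP, every follower $p$ and every owned vertex $v \in Q \cap V_p$ satisfy $\ogamma_p + r_p(\osigma) \geq r_p(v)$, and I would rewrite the left-hand side using $\ogamma_p(\osigma) = \ogamma_p$ (the PSP incentive is constant on prefixes of $\pi_\osigma$) together with the reward identity.

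For sufficiency, I would start from a well-behaved strategy profile $(\osigma, \ogamma)$ whose play satisfies the constraints and read constraint~5 through the reward identity (reading $\ogamma_p$ as the mean incentive $\ogamma_p(\osigma)$ of the play) to obtain $\ogamma_p(\osigma) + r_p(\osigma) \geq \max_{v \in Q} r_p(v) \geq r_p(v)$ for every follower $p$ and every $v \in Q \cap V_p$. This is precisely the hypothesis of Lemma~\ref{lem:enough}, so I would invoke that lemma to produce a PSP with the same play and the same leader and follower rewards; its proof further shows that the constructed PSP is an incentive strategy profile, which closes the argument.

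The easy parts are constraints~1--4 and the reward identity; the substance of constraint~5 is simply inherited from Lemmas~\ref{lem:done} and~\ref{lem:enough}. The main point I expect to need care with is the well-behavedness hypothesis itself: without it the $\liminf$ defining $r_p$ (and the incentive average) need not coincide with the frequency-weighted sums, and flow conservation would hold only up to the gap between $\liminf$ and $\limsup$, so I would make sure every step above genuinely uses limits rather than merely $\liminf$s. A smaller point to reconcile is the quantifier in constraint~5, where $v$ ranges over $Q$ while the lemmas quantify over the owned vertices $Q \cap V_p$; since a follower can only trigger a deviation at a vertex he controls, the stability-relevant vertices are exactly those in $Q \cap V_p$, and this is the reading under which constraint and lemmas agree.
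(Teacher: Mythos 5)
Your proof is correct and follows essentially the same route as the paper, which obtains the corollary directly from Lemmas~\ref{lem:done} and~\ref{lem:enough} together with the identity $r_p(\osigma)=\sum_{e\in E}p_e\, r_p(e)$ for well-behaved profiles (the paper leaves the bookkeeping for constraints~1--4 implicit). Your remark that constraint~5 must be read with $v$ ranging over $Q\cap V_p$, as in the lemmas, rather than over all of $Q$, is a correct and useful clarification of the paper's slightly loose formulation.
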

Note that the resulting PSP is an ISP even if  $(\osigma, \ogamma)$
is not. This is because the satisfaction of the constraints are enough for the
final contradiction in the proof of Lemma \ref{lem:enough}. 

\paragraph{Construction of incentive equilibria.}
The objective of the leader is obviously to maximise $r_l(\osigma) -
\sum_{p \in P \smallsetminus \{l\}} \ogamma_p =  \sum_{e\in E}p_e r_l(e) - \sum_{p \in
	P \smallsetminus \{l\}} \ogamma_p$. 
Once we have this linear programming problem, it is simple to determine a
solution in polynomial time \cite{Karmarkar/84/Karmarkar,Khachian/79/Elliptic}. 
We first observe that it is standard to construct a play defining a PSP from a
solution. 
(A description is given in the appendix.)

A key observation is that, if the linear program detailed above for sets $Q$ of
reachable vertices and $S$ of vertices visited infinitely often has a solution, then
there is a well behaved reward and punish strategy profile that meets this
solution.  

\begin{theorem}
	\label{theo:wbCSSPs}
	Non-well behaved PSPs that are also ISPs cannot provide better rewards for the
	leader than those from well behaved PSPs that are also
	ISPs. 
\end{theorem}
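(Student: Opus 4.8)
The plan is to analyse the limiting behaviour of the empirical edge frequencies along the play $\pi_\osigma$ of a given (possibly non-well-behaved) PSP that is an ISP, and to extract from it a single flow inducing a well-behaved ISP-PSP of at least the same leader reward. For $n \in \mathbb{N}$ let $f^{(n)} \in \mathbb{R}_{\ge 0}^{E}$ be the vector whose entry $f^{(n)}_e$ is the fraction of the first $n$ steps of $\pi_\osigma$ traversing edge $e$. First I would show that the set $L$ of limit points of $(f^{(n)})_n$ is a non-empty compact subset of the flow polytope over $S$ described by constraints $1$--$4$: transient vertices and edges carry vanishing frequency, the masses sum to $1$ up to an $O(1/n)$ error, and Kirchhoff conservation holds in the limit because along any prefix the number of entries and exits of a vertex differ by at most one. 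Since each map $f \mapsto \sum_e f_e r_p(e)$ is continuous, a standard argument then identifies the raw payoff of every player as $r_p(\osigma) = \min_{f \in L} \sum_e f_e r_p(e)$ (the liminf of a continuous image of a sequence in a compact set equals the minimum over its limit set).

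The key selection step is to pick $f^\ast \in L$ minimising the leader functional, so that $\sum_e f^\ast_e r_l(e) = r_l(\osigma)$ exactly, while simultaneously $\sum_e f^\ast_e r_p(e) \ge \min_{f\in L}\sum_e f_e r_p(e) = r_p(\osigma)$ for every follower $p$. Let $S^\ast = \mathrm{supp}(f^\ast) \subseteq S$ (strongly connected, since $f^\ast$ is a flow) and let $Q^\ast \subseteq Q$ consist of the vertices of a path from $v_0$ into $S^\ast$ together with $S^\ast$; such a path lies inside $Q$ because $S$ is reachable and strongly connected. I would then set $\ogamma^\ast_p = \max\bigl(0,\ \theta_p(Q^\ast) - \sum_e f^\ast_e r_p(e)\bigr)$, where $\theta_p(Q^\ast) = \max_{v\in Q^\ast} r_p(v)$ is the stability threshold of constraint $5$. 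Because $\theta_p$ is monotone in the reachable set and $Q^\ast \subseteq Q$, and because $\sum_e f^\ast_e r_p(e) \ge r_p(\osigma)$, each new incentive is dominated by the one the original ISP had to pay: $\ogamma^\ast_p \le \max\bigl(0,\ \theta_p(Q) - r_p(\osigma)\bigr) \le \ogamma_p$, where the last inequality is the stability of the original ISP guaranteed by Lemma~\ref{lem:done}.

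It then remains to evaluate the leader objective at the feasible point $(f^\ast,\ogamma^\ast)$ of the constraint system for the pair $(Q^\ast,S^\ast)$:
\[
\sum_e f^\ast_e r_l(e) - \sum_{p} \ogamma^\ast_p \;=\; r_l(\osigma) - \sum_p \ogamma^\ast_p \;\ge\; r_l(\osigma) - \sum_p \ogamma_p ,
\]
which is exactly the leader reward of the original non-well-behaved profile. The maximum of this linear objective over the rational constraint polytope is attained at a rational point of value no smaller than at $f^\ast$, and by Corollary~\ref{cor:wellBehavedSP} that point induces a genuine well-behaved PSP that is an ISP. Ranging over all admissible choices of $(Q,S)$ then yields the claim.

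I expect the main obstacle to be the simultaneity issue in the selection step: the leader's liminf and the followers' liminfs are in general realised at \emph{different} limit points of $(f^{(n)})_n$, so it is not a priori clear that a single flow can serve all players at once. The resolution is the monotonicity built into the incentive formula—every point of $L$ gives each follower at least its own liminf raw reward, hence requires no more incentive than the original profile—so minimising \emph{only} the leader functional over $L$ is safe for everyone. A secondary technical point deserving care is that a limit point $f^\ast$ may be irrational and thus not directly realisable as a lasso play; as noted above this is sidestepped by passing to the LP optimum over the rational polytope, which is attained at a rational point of value at least that of $f^\ast$.
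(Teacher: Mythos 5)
Your proposal is correct in its essentials, but it takes a genuinely different route from the paper's. The paper argues by contradiction and by explicit surgery on the play: it cuts the tail of $\pi_\osigma$ into finite cycles $\pi_a$ returning to a fixed vertex of $S$, each chosen long enough that its average reward exceeds $r_p(\pi)-\frac{1}{a}$ \emph{simultaneously for all players} (this is where the limes inferior is used: eventually every partial average is within $\delta$ of, or above, the liminf), concatenates $\pi_0\pi_1^{b_1}\pi_2^{b_2}\cdots$ with the $b_i$ growing fast enough that limit frequencies exist, keeps the incentives unchanged, and closes via Corollary~\ref{cor:wellBehavedSP}. You instead work in the frequency polytope: you take the limit set $L$ of the empirical edge-frequency vectors, use the identity $\liminf_n g(f^{(n)})=\min_{f\in L}g(f)$ to express each player's payoff, and select the limit point minimising the leader's functional, which automatically over-delivers to every follower; a single feasible point of the constraint system then hands the rest to the LP and to Corollary~\ref{cor:wellBehavedSP}. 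Your route buys a cleaner separation of concerns: the only analytic fact needed is the liminf/limit-point identity, and the delicate question of whether the concatenated play is genuinely well behaved (which the paper treats rather briskly) is delegated entirely to the standard construction of a play from an LP solution. Two small repairs are needed. First, $\mathrm{supp}(f^\ast)$ need not be strongly connected --- a circulation can decompose into disjoint cycles --- so you should not pass to the pair $(Q^\ast,S^\ast)$ but simply exhibit $(f^\ast,\ogamma^\ast)$ as a feasible point of the system for the original pair $(Q,S)$, with $\ogamma^\ast_p=\max\bigl(0,\ \theta_p(Q)-\sum_e f^\ast_e r_p(e)\bigr)\le\ogamma_p$; this costs nothing, since the theorem compares suprema over all admissible pairs. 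Second, your lowering of the incentives is an optimisation the theorem does not require: keeping $\ogamma^\ast=\ogamma$ already suffices, exactly as the paper does, and avoids any reliance on the precise form of the stability threshold.
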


\begin{proof}
	Corollary~\ref{cor:wellBehavedSP} shows that there exists a well defined
	constraint system obeyed by all well behaved PSPs that are also ISPs with a set
	$Q$ of reachable vertices and a set $S$ of recurrent vertices. 
	
	Let us assume for contradiction that there is a reward and punish strategy
	profile $(\osigma, \ogamma)$ that defines a play $\pi_\osigma$ with
	the same sets $Q$ and $S$ of reachable and recurrent vertices, respectively, that
	provides a strictly better reward $r_l(\osigma) - \sum_{p \in P
		\smallsetminus \{l\}} \ogamma_p$, which exceeds the maximal reward obtained by the
	leader in well behaved PSPs that are also ISPs by some $\varepsilon >0$. 
	
	We now construct a well behaved PSPs that are also ISPs and that also provides a
	better return. 
	First, we take a $\ogamma'$ with $\ogamma_p = \ogamma_p'$ for all followers $p$. This
	allows us to focus on the raw rewards only. 
	
	Let $k$ be some position in $\pi_\osigma$ such that, for all $i \geq k$, only
	positions in the infinity set $S$ of $\pi_\osigma$ occur. 
	Let $\pi$ be the tail $v_k v_{k+1} v_{k+2} \ldots$ of $\pi_\osigma$ that starts
	in position $k$. 
	Obviously $r_p(\pi)=r_p(\osigma)$ holds for all players $p \in P$.

	We observe that, for all $\delta > 0$, there is an $l \in N$ such that, for all
	$m \geq l$, $\frac{1}{m}\sum_{i=0}^{m-1} r_p\big((v_i,v_{i+1})\big) > r_p(\pi) -
	\delta$ holds for all $p \in P$, as otherwise the limes inferior property would
	be violated.

	We now fix, for all $a \in \mathbb N$, a sequence $\pi_a = v_k v_{k+1} v_{k+2}
	\ldots v_{k+m_a}$, such that $v_{k+m_a+1}= v_k$ and
	$\frac{1}{m}\sum_{i=0}^{m_a-1} r_p\big((v_i,v_{i+1})\big) > r_p(\pi) -
	\frac{1}{a}$ holds for all $p \in P$.

	Let $\pi_0 = v_0 v_1 \ldots v_{k-1}$. We now select $\pi'=\pi_0 {\pi_1}^{b_1}
	{\pi_2}^{b_2} {\pi_3}^{b_3} \ldots$, where the $b_i$ are natural numbers big
	enough to guarantee that $\frac{b_i \cdot |\pi_i|}{|\pi_{i+1}| + |\pi_0| +
		\sum_{j = 1}^i b_j \cdot |\pi_j| } \geq 1 - \frac{1}{i}$ holds.

	Letting $b_i$ grow this fast ensures that the payoff, which is at least
	$r_p(\pi) - \frac{1}{i}$ for all players $p \in P$, dominates till the end of
	the first iteration%
	\footnote{Including the first iteration of $\pi_{i+1}$ is a technical necessity,
		as a complete iteration of $\pi_{i+i}$ provides better guarantees, but without
		the inclusion of this guarantee, the $\pi_j$'s might grow too fast, preventing
		the existence of a limes.} 
	of $|\pi_{i+1}|$.

	The resulting play belongs to a well behaved (as the limit exists) strategy
	profile, and can thus be obtained by a well behaved PSP by Corollary \ref{cor:wellBehavedSP}. 
	It thus provides a solution to the linear program from above, which contradicts
	our assumption. 
	\qed
\end{proof}

Consequently, it suffices to guess the optimal sets $Q$ of vertices that occur
and $S$ of vertices that occur infinitely often to obtain a constraint system
that describes an incentive equilibrium, which is well behaved and a
PSP---and therefore subgame perfect. 

\begin{corollary}
	The decision problems `is there a (subgame perfect) incentive equilibrium with
	leader reward $\geq r$' is in NP, and the answer to these two questions is the
	same. 
\end{corollary}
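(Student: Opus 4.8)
The plan is to convert the chain of reductions established in this section into a nondeterministic polynomial-time decision procedure, and to read the equivalence of the two questions directly off the fact that the witnesses this procedure produces are subgame perfect by construction. The guiding observation is that the optimal leader reward over \emph{all} incentive strategy profiles is already attained among well-behaved PSPs that are also ISPs, and that these are exactly the feasible points of the linear program from Corollary~\ref{cor:wellBehavedSP}. So rather than reason about arbitrary profiles, I would work entirely with this constraint system.

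First I would settle the ``same answer'' claim. Suppose there is an incentive equilibrium, i.e.\ an ISP, with leader reward $\geq r$. By Theorem~\ref{theo:ISP} there is a PSP that is also an ISP, defines the same play, and yields the same leader reward; since every PSP is subgame perfect by definition, this already exhibits a subgame perfect witness of reward $\geq r$. The converse is immediate, as subgame perfect incentive equilibria are a subset of incentive equilibria. Hence the two decision problems have the same answer, and it suffices to decide either one using PSPs as a complete class of candidate profiles. For the NP upper bound I would then nondeterministically guess the set $Q$ of occurring vertices and the set $S \subseteq Q$ of recurrent vertices, checking in polynomial time the structural conditions they must meet (reachability of $Q$ from $v_0$, strong connectedness of $S$, and $S \subseteq Q$). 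By Theorem~\ref{theo:wbCSSPs} it is enough to search among well-behaved profiles with these sets, and by Corollary~\ref{cor:wellBehavedSP} such profiles correspond precisely to solutions of constraints~$1$--$5$. Maximising the objective $\sum_{e \in E} p_e r_l(e) - \sum_{p \in P \smallsetminus \{l\}} \ogamma_p$ subject to these constraints is a linear program that can be built and solved in polynomial time~\cite{Karmarkar/84/Karmarkar,Khachian/79/Elliptic}; the guess is accepted iff the optimum is $\geq r$. Completeness holds because the optimal well-behaved PSP supplies some accepting guess $(Q,S)$, and soundness because, by Corollary~\ref{cor:wellBehavedSP}, any feasible LP solution yields a genuine PSP that is an ISP of the claimed reward.

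The main obstacle is that constraint~$5$ refers to the values $r_p(v)$ of the two-player mean-payoff games $\mathcal{G}_p$, which are not known to be polynomial-time computable, so the linear program cannot simply be written down in polynomial time. Here I would exploit that the decision version of 2MPGs lies in UP$\cap$co-UP, hence in NP$\cap$co-NP~\cite{Jurdzinski/98/UP,Zwick+Paterson/96/payoff}, and that, by positional determinacy, each $r_p(v)$ is a rational of polynomial bit-size. I would therefore additionally guess all the values $r_p(v)$ for $p \in P \smallsetminus \{l\}$ and $v \in Q$, together with positional strategies in $\mathcal{G}_p$ for both the follower $p$ and the opposing coalition that certify each guessed value from below and from above; each certificate is verified in polynomial time by computing the (one-player) optimal response against the fixed positional strategy. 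Only after this verification do I substitute $\max_{v \in Q} r_p(v)$ into the linear program and solve it. This keeps the whole procedure within NP while faithfully encoding the stability constraints, and thereby establishes membership in NP for both formulations and their equivalence. \qed
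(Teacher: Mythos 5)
Your proof is correct and follows essentially the same route as the paper: the equivalence of the two questions is read off Theorem~\ref{theo:ISP} (every ISP has an equally good PSP defining the same play, and PSPs are subgame perfect by construction), and the NP upper bound comes from guessing $(Q,S)$ and solving the linear program of Corollary~\ref{cor:wellBehavedSP}, with Theorem~\ref{theo:wbCSSPs} guaranteeing that restricting to well-behaved profiles loses nothing. Your extra step of guessing the values $r_p(v)$ together with positional strategies that certify them from both sides is a welcome tightening rather than a deviation: the paper's sketch simply asserts that the constraint system can be built in polynomial time, which tacitly presupposes access to the 2MPG values, and your certification argument is precisely what is needed to keep the entire procedure inside NP.
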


Note that, if we have a fixed number of players, the number of possible
constraint systems is polynomial. Like in \cite{6940380}, there are only
polynomially many (for $n$ vertices and $k$ followers $O(n^k)$ many) second parts
(the constraints on the follower rewards) of the constraint systems. For them,
it suffices to consider the most liberal sets $Q$ (which is unique) and $S$ (the
SCCs in the game restricted to $Q$, at most $n$). 
For a fixed number of players, finding incentive equilibiria is therefore in the
same class as solving 2MPGs.
By adapting (Section~\ref{sec:complexity}) the NP hardness proof for leader
equilibrium in mean-payoff games from \cite{6940380} we get the following results. 

\begin{theorem}
	\label{NPhard1}
	The problem of deciding whether an incentive equilibrium $\sigma$ with
	reward $r_l(\sigma) \geq 1-1/n$ of the leader exists in games with rewards in
	$\{0,1\}$, is NP-complete.
\end{theorem}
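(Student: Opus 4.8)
The plan is to establish NP-completeness by showing membership in NP and NP-hardness separately. Membership in NP follows immediately from the corollary preceding this theorem: we guess the sets $Q$ and $S$ of occurring and recurrent vertices, build the associated constraint system in polynomial time, solve the resulting linear program in polynomial time, and check whether the leader's objective value reaches $1-1/n$. So the entire weight of the proof lies in the NP-hardness direction, and the statement itself already tells us the strategy: adapt the NP-hardness proof for leader equilibria from \cite{6940380}.

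For the hardness reduction, I would reduce from a standard NP-complete problem --- most naturally \textbf{SAT} or \textbf{3SAT}, which is what one expects the cited leader-equilibrium proof to use. The idea is to construct, from a Boolean formula $\varphi$ with $n$ variables, a multi-player mean-payoff game with rewards in $\{0,1\}$ such that an incentive equilibrium achieves leader reward $\geq 1-1/n$ if and only if $\varphi$ is satisfiable. The gadget structure would encode each variable as a choice made by a dedicated follower (a vertex where that follower picks between a ``true'' edge and a ``false'' edge), and encode the clauses so that the leader can only route the play through a high-reward cycle when the chosen assignment satisfies every clause. The threshold $1-1/n$ is calibrated so that the leader attains it exactly when she can steer the play into the ``all clauses satisfied'' region; otherwise some clause forces the play through a reward-$0$ region, dropping her payoff below the threshold.

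The crucial point --- and where I would spend most care --- is verifying that the \emph{incentive} mechanism does not trivialise the construction, i.e.\ that the leader cannot simply bribe followers to deviate from a truthful assignment encoding and thereby fake satisfiability. This is exactly the place where the leader-equilibrium proof must be adapted rather than copied verbatim. Concretely, one must argue that giving incentives does not let the leader circumvent the clause constraints: the reward structure must be designed so that any incentive payment the leader could offer to ``unlock'' a satisfying route costs her at least as much as it gains, keeping her net payoff below $1-1/n$ unless the route corresponds to a genuine satisfying assignment. By Theorem~\ref{theo:ISP} and the characterisation via PSPs and the constraint system, it suffices to analyse the value of the linear program over the possible choices of $S$; I would show that the optimum of this program exceeds $1-1/n$ precisely when $S$ corresponds to a satisfying assignment's cycle.

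The \textbf{main obstacle} I anticipate is the calibration of the follower rewards and the constraint (5) thresholds $\max_{v \in Q}(r_p(v))$ so that incentive transfers are net-neutral or net-negative for the leader along every ``cheating'' path but net-positive along a satisfying one. In the pure leader-equilibrium setting this is automatic because incentives are forced to be zero, so the adaptation must add a term-by-term argument that the extra freedom of nonzero incentives never helps beyond what a satisfying assignment already provides. I expect the soundness direction (satisfiable $\Rightarrow$ leader reward $\geq 1-1/n$) to be routine once the gadgets are fixed, while the completeness direction (leader reward $\geq 1-1/n \Rightarrow$ satisfiable) will require this careful accounting of the incentive budget against the $2$MPG values $r_p(v)$ that bound each follower's deviation.
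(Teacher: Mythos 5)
Your NP-membership argument is exactly the paper's (guess $Q$ and $S$, build and solve the linear program from Corollary~\ref{cor:wellBehavedSP}, compare the optimum to $1-1/n$), and you correctly identify both the source to adapt (the leader-equilibrium hardness proof of \cite{6940380}, via 3SAT) and the one issue that makes the adaptation non-trivial: the leader must not be able to buy her way past the clause constraints with incentives. But the hardness direction, which you yourself note carries the entire weight of the theorem, is never actually constructed --- the gadgets, the reward calibration, and the ``incentive budget'' accounting are all deferred with ``I would show'' and ``one must argue''. As it stands this is a plan for a proof, not a proof.

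Moreover, the one concrete design decision you do commit to points the wrong way. You propose that each \emph{variable} is a choice made by a dedicated \emph{follower}; that is precisely the architecture most vulnerable to the bribery problem you flag, since the leader can then incentivise that follower to pick whichever value she likes. The paper inverts the roles: the \emph{leader} owns the assignment vertices and selects, for each variable $Z$, one of two literal players $z$ or $\neg z$; there is one follower per \emph{literal} ($2n$ of them), and each literal vertex only offers the owning follower a veto --- continue the play, or exit to a sink whose self-loop pays $1$ to every literal player and $0$ to the leader. Because the sink already gives a follower the maximum possible reward, the constraint $\ogamma_p + r_p(\osigma) \geq r_p(v)$ from Lemma~\ref{lem:done} forces the leader to pay a full unit of incentive to any literal player who would receive $0$ on the final evaluation cycle, wiping out her payoff; this is what makes incentives net-negative on every ``cheating'' path without any delicate term-by-term calibration. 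If the formula is satisfiable the leader routes through a consistent set of literals, pays a total incentive of $1/n$ on the length-$n$ evaluation cycle, and nets exactly $1-1/n$; if it is unsatisfiable every route must visit both $z$ and $\neg z$ for some variable, the disadvantaged literal player deviates to the sink, and the leader gets $0$. You would need to supply a construction with this (or an equivalent) property to close the gap.
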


\subsection{Secure $\varepsilon$ incentive strategy profiles}
\label{sec:security}

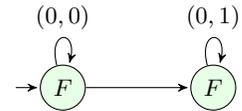
\begin{wrapfigure}[7]{r}{3.2cm} 
	\vspace{-8mm}
	\begin{tikzpicture}[->,>=stealth',shorten >=1pt]
	\tikzstyle{vertex}=[circle,fill=black!10,minimum size=17pt,inner sep=0pt,font=\sffamily\small\bfseries]
	\tikzstyle{rvertex}=[circle,fill=red!10,minimum size=17pt,inner sep=0pt,font=\sffamily\small\bfseries]
	\tikzstyle{gvertex}=[circle,fill=green!10,minimum size=17pt,inner sep=0pt,font=\sffamily\small\bfseries]
	\node (6) at (0.25,1) {};
	\node (1) at (1,1) [gvertex,draw]{$F$} ; 
	\node (2) at (3,1) [gvertex,draw] {$F$};
	
	\path[every node/.style={font=\sffamily\small}]
	(6) edge [right] node[] {} (1)
	(1) edge [right] node[] {} (2)       
	edge [loop above] node[] {$(0,0)$} (1)
	(2) edge [loop above] node[] {$(0,1)$} (2);
	
	\end{tikzpicture}
	\caption{Secure equilibria.}
	\label{fig:secure}
\end{wrapfigure}

We take a short detour to another class of equilibria that make a solution
stable: secure equilibria. Secure equilibria  
\cite{Chatterjee:2006:GSE:1226608.1226612} have been defined as Nash equilibria
with the additional property that each player would, upon unilateral deviation,
either lose strictly, or no other player would lose. Naturally, we have to
adjust this definition appropriately. 
We say that a strategy profile $(\osigma, \ogamma)$ is a secure incentive
strategy profile, if, upon unilateral deviation, every follower either receives a
strictly lower reward, or an equal reward. In the latter case, all other
players have to receive at least the same reward as before. 
We show that we can obtain subgame perfect secure $\varepsilon$ incentive
equilibria (i.e., every subgame is a secure incentive strategy profile) by
simply increasing 
the individual incentives from the strategy we have constructed by
$\frac{\varepsilon}{|P|}$.
The payoff between secure $\varepsilon$ incentive equilibria and general
incentive equilibria is therefore arbitrarily small. 
This is in contrast to leader and Nash equilibria, where security can come to a
high cost.

In the simple example shown in Figure \ref{fig:secure} 
(rewards are shown in the
order (follower, leader)), where the left vertex is owned by the follower, the
leader can incentivise the follower to move to the right vertex by an
arbitrarily small incentive $\varepsilon$, resulting in a secure incentive
strategy profile and payoffs of $1-\varepsilon$ and $\varepsilon$ for the leader
and her follower, respectively. 
A secure leader (and Nash) equlibrium would require the follower to stay forever
in the left vertex, resulting in a payoff of $0$ for the leader and her follower
alike. 
This is in contrast to `normal' leader (or Nash) equilibria, which would allow
for the follower moving the token to the right, resulting in a payoff of $1$ and
$0$ for the leader and her follower, respectively. 

\section{Experimental results}
\label{sec:tooldetails}
We have implemented a tool~\cite{mmpgsolver} in C++ to evaluate the performance of the proposed
algorithms for multi-player mean-payoff games (MMPG) for a small number
of players. We implemented an algorithm from \cite{Schewe/08/improvement} to find mean values at the vertices. 
We then infer and solve a number of constraint systems. We describe our main algorithm here.

\subsection{Algorithm specific details}
We first evaluate MMPGs using reduction to solving underlying 2MPGs. 
We then infer and solve a number of linear programming problems to find a solution. 
For few number of players, the number of different solutions to these games is usually small, and,
consequently, the number of linear programming problems to solve is small,
too. 
In order to find the individual mean partition, we use an algorithm from \cite{Schewe/08/improvement},
that finds $0$-mean paritions, and expand it quantitatively to find the value of 2MPGs. 
We recall that for 2MPG both players have optimal memoryless strategies.
Under such strategies, the game will follow a `lasso path' from every starting
vertex: a finite (and possibly empty) path, followed by a cycle, which is
repeated infintiely many times. The value of a game position is defined by the
average of the edge weights on this cycle. 

In our context, the edge weights are either $0$ or $1$.
The values of the vertices are therefore fractions $\frac{a}{l}$ with $0 \leq a
\leq l \leq n$, where $l$ is the length of the cycle, and $a$ is the number of
`accepting' events in the DBA that refers to the objective of the respective
player, i.e., the edges with value $1$, occurring on this cycle.  


An $\alpha$-mean partition of a 2MPG is the subset of vertices, for which the
return is $\geq \alpha$.
Conceptually, to find the $\frac{a}{l}$-mean partition, one would simply
subtract $\frac{a}{l}$ from the weight of every edge and look for the $0$-mean
partition. 
However, to stay with integers, it is better to use integer values on the edges,
e.g., by replacing the $0$s by $-a$, and the $1$s by $l-a$. 
For games with $n$ vertices, there are only $O(n^2)$ values for the fraction $\frac{a}{l}$ to consider, 
as optimal memoryless strategies always lead to lasso paths and only the cycle at the end of the lasso 
determines the values for $a$ and $l$, where $0<a<l\leq n$.

We start by narrowing down the set of values by classifying the mean partition in a logarithmic search.
After determining the $\frac{1}{2}$ mean partition, we know which values are $<0.5$ and $\geq 0.5$, respectively.
The two parts of the game can then be analysed further, determining the $\frac{1}{4}$ and $\frac{3}{4}$ mean partition, respectively.
After $s$ such partitionings, all values in a partition of the game are either known to be in an $[k \cdot 2^{-s},(k+1) \cdot 2^{-s}[$ interval for some $k<2^s -1$, or in the interval $[1-2^{-s},1]$.
We stop to bisect when the size $p$ of a partition is at most $2^s$.
In this case, the respective interval has $f \leq p$ fractions with a denominator $\leq p$.
We determine them, store them in a balanced tree, and use it to determine the correct value of all vertices of the partition in $\lceil \log_2 f \rceil$ steps.

\emph{Solving multiplayer mean-payoff games.}
\begin{enumerate}
	\item Initially, we start with a $\frac{1}{2}$ mean partition. For the $\geq$ part of the game, we continue with a $\frac{3}{4}$ mean partition, and so forth.
	After $s = \lceil \log_2 n \rceil$, we have narrowed the area down to an interval of length $2^{-s}$, and we know that the value lies within this interval.
	\item For each denominator, there is at most one numerator in this interval%
	\footnote{Exception: $n = 2^s$ But then we can simply look for the $1$ mean partition and are done.}.
	Thus, going through all possible denominators, we can then sort the resulting fractions in a balanced tree. It suffices to take those, which are relative prime. (The value is otherwise already in the balanced tree.)
	\item We then use the values stored in the balanced tree to find mean partitions, starting with the root. At most height-of-the-tree many further iterations are needed ($O(\log n)$ many).
\end{enumerate}

The number of different values of nodes in a 2MPG is usually small, and certainly it would be much smaller than the number of vertices in the game.
Consequently, the number of constraint systems is also small for a small number of players.

We use this algorithm to evaluate a number of randomly created three player MPGs, where the player take turns. 
We consider three players -- player 1, player 2 and a leader and two different evaluations on the same game graph.
We first see how each player fares when they try to maximise their return against a coalition of all other players, including the leader.
In the first evaluation, leader forms a coalition with player~$1$ (minimiser) against player~$2$ (maximiser) on the payoffs defined for player~$2$. 
We find the different possible mean values at the nodes in this evaluation, using the algorithm from above.
In the second evaluation, leader forms a coalition with player~$2$ (minimiser) against player~$1$ (maximiser) on the payoffs defined for player~$1$. 
We also note the different possible mean values at the nodes in this evaluation, using again the algorithm from above.

The resultant two-player games provide the constraints for the linear programming problems. 
These different values form the different thresholds that we have to consider. 
We now consider all possible combinations of these different threshold values for the followers and determine the vertices that comply with them. 

For each set of vertices, we then do the following:

a) We first recursively remove the nodes that have no successor;
b) We then remove the nodes that are not reachable from the initial state, i.e., we determine the set of reachable vertices;
c) We determine the strongly connected set of components (SCCs); and
d) For the SCCs formed from above, we build and solve the respective linear program.

\emph{Constraints on SCCs.}
To construct the linear programs over the SCCs formed from above, we have side-constraints on the edge-ratio and vertex-ratio (these constraints
are to comply with the limit behaviour of nodes and edges) and we have constraint over the reward of player. 
The constraints over ratios of edge and vertices and over reward of the players can be seen in detail in \cite{6940380}.
Additionally, to construct an incentive equilibrium, we have a constraint over incentives here.
We first have constraints on the nodes and edges that form part of the strongly connected component $S$:
a)  ratio of vertices and edges that are not part of $S$ is $0$,
b)  ratio of vertices and edges that are in $S$ is $\geq 0$,
c)  sum of the ratio of vertices is $1$

The second part of constraint system is constraint over rewards:

\begin{itemize}
	\item for every player $p$ other than the leader, that own some vertex in $S$, we have constraint over her reward 
	$$\iota_p + \sum_{e\in E}p_e r_p(e) \geq \max_{v\in Q}(r_{p}(v))$$
	where $p_e$ is the ratio by which edge $e$ is taken, $r_p(e)$ is the edge weight for player $p$ at edge $e$, $r_{p}(v)$ is the mean value of game for player $p$ in set $S$ and $\iota_p$ is an incentive given to player $p$ and $\iota_p \geq 0$
	\item objective of the constraint system is to maximise leader's reward, i.e., maximising reward at leader nodes in $S$. This gives us objective function: maximise $\sum_{e\in E}p_e r_l(e) - \sum_{p\in P}\iota_p$
\end{itemize}
For a set $Q$, we may have number of $SCCs$ and there is a constraint system for every such component. 
In this case, we would take the one that maximises leader's reward.
\begin{figure}[htb]
	\centering
	\includegraphics[width=.7\linewidth, scale=0.27]{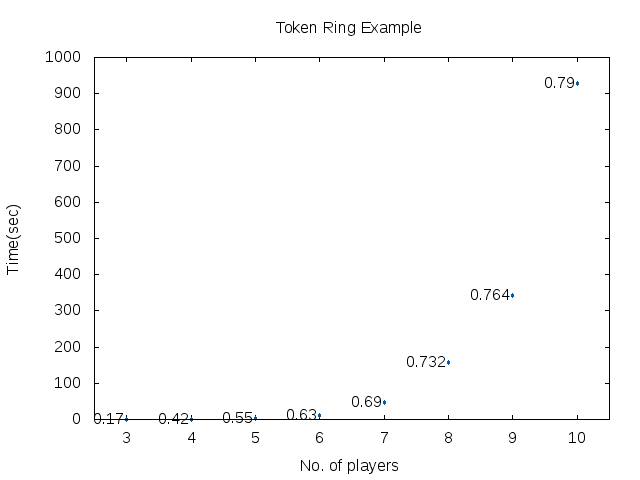}
	\caption{
		The figure shows results for a generalisation of example $2$ for multiple
		players with $n$ nodes in the inner cycle and $n-1$ nodes in outer cycles
		where $n$ is the number of players.
	}  
	\label{fig:cegar1}
\end{figure}

\subsection{Experimental results} 
Experiments indicate that our implementation of the algorithm can solve
examples of size $100$ nodes and $10$ players within 30 minutes. 
The algorithm is, of course, much faster for the games with two or three players. 
Figures~\ref{fig:cegar1} and~\ref{fig:cegar2} show the experimental results
for the following two problem classes.

\begin{figure}[t]
	\centering
	\includegraphics[width=.5\linewidth, scale=0.2]{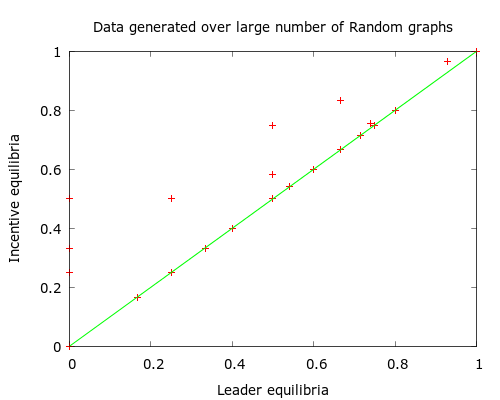}\includegraphics[width=.5\linewidth, scale=0.2]{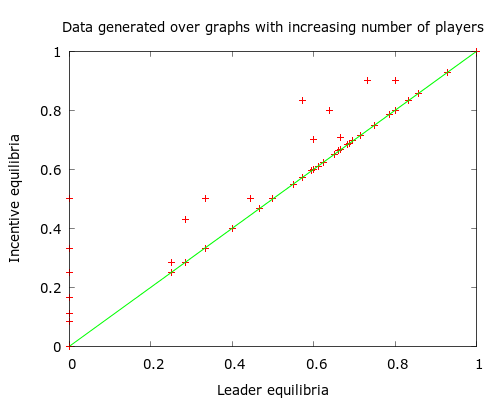}
	\caption{
		The left figure shows the results for randomly generated MMPGs with
		3 players, while the right one is for randomly generated MMPGs with 3 to 10
		players. 
	}
	\label{fig:cegar2}
\end{figure}


\begin{itemize}
	\item 
	Recall the example from Figure~\ref{fig:intro-token}. 
	We generalise this example for token ring graph parameterised by 2 variables, n
	and d. It has 'n' nodes on the inner cycle, each of which correspond to 'n' 
	different players and each of these 'n' nodes is also present on another cycle
	of length 'd'. The weights are set such that, all players except the leader get
	'1/n' if they chose the inner ring and get '1/d' if they chose their
	respective outer ring. The leader reward is '1' in the inner ring and
	'1/d' in all the other rings.
	The data supports the pen-and-paper analysis that incentives are useful iff $n > d >  n(n-1)/(2n-1)$ holds.
	Figure~\ref{fig:cegar1} shows the leader reward for this example and the running time of our tool to compute it.
	
	\item
	Figure~\ref{fig:cegar2} (left plot) shows the difference between
	incentive equilibrium and leader equilibrium for randomly generated $3$ player
	MMPGs, while the right plot shows similar results on random graphs, where
	the number of players range from $3$ to $10$. 
\end{itemize}

The evaluation results 
confirm that the leader reward increases significantly in incentive equilibria when compared
to leader equilibria.

\section{Discussion}
\label{sec:dis}
The main contribution of this paper is the introduction of incentive equilibria
in multi-player mean-payoff games and the implementation of our techniques in a
tool. We study how a rational leader might improve over her outcome by paying
small incentives to her followers. At first, it may not seem to be a rational
move of the leader, but close insight would show how a leader might improve her
reward in this way.
The incentive equilibria are seen as an extension to leader equilibria, where a
rational leader, by giving an incentive to every other player in the game, can
derive an optimal strategy profile. We believe that these techniques are helpful
for the leader when maximising the return for a single player and would also be
instrumental in defining stable rules and optimising various outcomes. 
The evaluation results from Section \ref{sec:tooldetails} show that the results
are significantly better for the leader in an incentive equilibrium as compared
to her return in a leader equilibrium.
  \newpage

\appendix
\newpage
\section*{Appendix}
\section{From $Q$, $S$, and a solution to the linear programs to a well behaved reward and punish strategy profile}
We start with the simple case that the vertices and edges with non-$0$ ratio are strongly connected.

We design $\pi_\sigma$ as follows.
We first go from the initial vertex $v_0$ through states in $Q$ to some state in $S$.
(Note that this initial path has no bearing on the lower limit that defines the payoff of the individual players.)

Once we have reached $S$, we intuitively keep a list for each vertex in $S$.
In this list, we keep the number of times each outgoing edge with non-$0$ ratio has been taken.
We also apply an arbitrary (but fixed) order on the outgoing edges.
Each time we are in this vertex, we choose the first edge (according to this order) that has been taken less often 
 (from this vertex) than $\frac{p_e}{p_v}$, the ratio $p_e$ of the edge divided by the ratio $p_v$ of this vertex, suggests. If no such edge exists, we take the first edge.


The result is obviously a well behaved strategy profile and the first part of the constraint system is clearly satisfied.
It therefore suffices to convince ourselves that the second part is satisfied as well.

Now assume for contradiction that this is not the case.
Let $q_v$ and $q_e$ be the real ratio of the vertices and edges, respectively.
Note that our simple rule for the selection of vertices implies that $\frac{p_e}{p_v}$ is correct for all edges $e = (v,v') \in E \cap S \times S$.
Then there must be a vertex $v \in S$, which has the highest factor $\frac{q_v}{p_v}$.
As it is the highest factor, none of its predecessors in $E \cap S \times S$ can have a higher ratio; consequently, they must have the same ratio.
By a simple inductive argument, this expands to the complete strongly connected set of non-$0$ vertices.
As $\sum_{v \in S} p_v = 1 = \sum_{v \in S} q_v$ holds, this implies $p_v = q_v$ for all $v \in S$.

To extend this argument to the general case, we first observe that the non-$0$ vertices and edges form islands of (maximal) SC parts $C_1$, through $C_k$.
We use this observation to compose a play as follows.

We start with an initial part, a transfer from $v_0$ to $C_1$ as in the simple case.
We then continue by playing a $C_1^1$ part, a transfer, a $C_2^1$ part, a transfer, $\ldots$, a $C_k^1$ part, transfer $C_1^2$, and so forth. 
To achieve a well behaved strategy profile we do the following.
\begin{enumerate}
\item We fix the ratio  $\sum_i C_1^i : \sum_i C_2^i : \ldots : \sum_i C_k^i$ according to the the sum of the $p_v$ for vertices $v$ in the respective component. This ratio never changes, and it is given by natural numbers $c_1, c_2, \ldots, c_k$, such that $c_1 : c_2 : \ldots : c_k$ satisfies this ratio.
 
\item  We let $C_j^i$ grow slowly with $i$. We can, for example, use $i \cdot c_j$. 
 
 Note that the transfer part has constant length, bounded by $|S|$.
 Thus the limit ratio of transfer is $0$.

\item We let the transfer to $C_{j+1}^i$ go to the vertex, in which $C_j^i$ was left. Note that the transfer may contain vertices of various components, but as the overall ratio of the transport is $0$, this does not affect the limit probability.
 
Thus, we can use the controller from the simple case of one SCC for the sequence $C_i^1, C_i^2, C_i^3 \ldots$, which only focuses on the relevant part of the $i^{th}$ component.
\end{enumerate}

In effect, we have simple controllers for the individual components, and a single counting controller that manages the transfer between the components.

It is easy to see that the resulting controller inherits the right ratios from the simple individual controllers.
Together with Corollary \ref{cor:wellBehavedSP} we get:

\paragraph{Observation.}
If the linear program from above for sets $Q$ of reachable states and $S$ of states visited infinitely often has a solution, then there is a well behaved reward and punish strategy profile that meets this solution.
\qed

\section{Implementation related details}
For an efficient implementation we restricted our rewards to $0$ and $1$. 
This class of MMPG is sufficient to solve quantitative Buchi game problems where
the goal of each player is to maximise the limit share of time spend in
accepting states.  
A practical example of such a situation is shown below. 
\subsection{Paradigmatic examples}
Consider a client and server application where client is responsible for making
requests and server is responsible for granting or scheduling access to the
resource being requested. The client and server programs are depicted as a
deterministic buchi automata (DBA) and their objectives remain to maximise the
limit-average share of the time they would spent in an accepting state. We refer
to DBA $\mathcal A$ from Figure \ref{fig:client} and DBA $\mathcal B$ from
Figure \ref{fig:server} for the objectives of client and server in respective
order. 
While $\mathcal A$ would try to maximise the time spent in accepting state and
thereby increasing her limit share of reward, $\mathcal B$ would try to maximise
overall utilisation by trying to optimise return for both $\mathcal A$ and
$\mathcal B$. In this example, if $\mathcal A$ is in an accepting state or
utilising the critical resource, $\mathcal A$ would receive a utility of
1. While, if $\mathcal B$ is in an acceptance state or utilising the critical
resource, $\mathcal B$ would receive a utility of 10. If both $\mathcal A$ and
$\mathcal B$ alternatively take turns between accessing the critical resource,
overall return for $\mathcal A$ is $0.5$ while overall return for $\mathcal B$
is $5$.  
Contrary to this, a rational server would look at means of improving its
utilisation. 
For this, $\mathcal B$ may try to maximise the time spent in the acceptance
state so as to increase the overall resource utilisation.  
For example, $\mathcal B$ may take turn $\frac{3}{4}$ of the time to be spent in
critical section and would allow $\mathcal A$ to access the critical section for
$\frac{1}{4}$ of the total time.  
$\mathcal B$ may further give a small incentive amount of $\frac{1}{4}$ to $\mathcal A$
so as overall return for $\mathcal A$ is same as earlier, i.e., $\frac{1}{2}$ while
$\mathcal B$ now gains from $5$ to $7.25$. This way, return for $\mathcal B$
would be increased to $7.25$. 

\begin{figure}[!htb]  \centering
	\begin{minipage}{.5\textwidth}
		\centering
\begin{tikzpicture}[->,>=stealth',shorten >=1pt]
  \tikzstyle{vertex}=[circle,fill=black!10,minimum size=17pt,inner sep=0pt,font=\sffamily\small\bfseries]
\node (4) at (-0.25,0.5) {};
\node (1) at (0.5,0.5) [vertex,draw]{} ; 
\node [state,accepting](2) at (3.5,0.5) [vertex,draw]{} ;
\node (3) at (3,-1) [vertex,draw]{} ; 
\node (5) at (1,-1) [vertex,draw]{};

\path[every node/.style={font=\sffamily\small}]
  (4) edge [right] node[] {} (1)
  (1) edge [right] node[above] {$r'$} (2) 
	  edge [loop above] node[] {$r$} (1) 
  (2) edge [bend left]  node[below] {$r$} (1)
      edge [bend left]  node[right] {$s$} (3)
      edge [bend right, dashed] node[right] {$s'$} (3)
      edge [loop above] node[] {$g$} (2)
  (3) edge [bend left] node[below] {$s$} (5)  
      edge [below,dashed] node[below] {$s'$} (5)
  (5) edge [bend left] node[left] {$r$} (1);

\end{tikzpicture} 
\caption{\label{fig:client}}
\end{minipage}%
\begin{minipage}{0.5\textwidth}
	\centering
\begin{tikzpicture}[->,>=stealth',shorten >=1pt]
  \tikzstyle{vertex}=[circle,fill=black!10,minimum size=17pt,inner sep=0pt,font=\sffamily\small\bfseries]
\node (4) at (0.25,1) {};
\node (1) at (1,1) [vertex,draw]{} ; 
\node [state,accepting] (2) at (4,1) [vertex,draw] {};

\path[every node/.style={font=\sffamily\small}]
  (4) edge [right] node[] {} (1)
  (1) edge [right] node [above] {$r'$} (2) 
	  edge [loop above] node[]  {$s$} (1)      
	(2) edge [bend left] node [below] {$r$} (1)
      edge [loop above] node[] {$g$} (2);
            
\end{tikzpicture}
\caption{\label{fig:server}}
\end{minipage}
\end{figure}

In some state of this model, $\mathcal A$ has choices between two
edge-transitions to the same state. 
For example, in Figure \ref{fig:client}, $\mathcal A$ has the choice to take two
different edge-transitions to stay in critical section. 
On one edge transition, $s$, the edge-weight is $(0,1,-1)$ and there is another
edge transition $s'$ that has edge weight $(-1,3,-2)$.  
Here, pay-offs are in the order $\mathcal A$, $\mathcal B$ and a passive
player. 
If $\mathcal A$ alternates between taking the two transitions, $\mathcal A$
would receive a payoff of $-0.5$ and $\mathcal B$ gets $2$ here.  
In order to gain maximum benefit, $\mathcal B$ may assign a strategy to
$\mathcal A$ where instead of doing alternations between two transitions,
$\mathcal A$ would take edge transition $s$ with probability $\frac{1}{4}$ and
$s'$ with probability $\frac{3}{4}$.  
To incentivise $\mathcal A$ for this strategy, $\mathcal B$ gives an incentive
of $0.25$ to $\mathcal A$ and thus $\mathcal A$ would receive an overall return
of $-0.5$ and payoff for $\mathcal B$ is increased from $2$ to $2.25$. 

\section{Complexity}
\label{sec:complexity}
\begin{theorem}
  \label{NPhard1}
  The problem of deciding whether an incentive equilibrium $\sigma$ with
  reward $r_l(\sigma) \geq 1-1/n$ of the leader exists in games with rewards in
  $\{0,1\}$, is NP-complete.
\end{theorem}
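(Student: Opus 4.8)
The plan is to treat the two directions of NP-completeness separately. Membership in NP is essentially already in hand: by the corollary established above, every incentive equilibrium is matched by a well-behaved PSP that is an ISP, and such a profile is witnessed by the pair of sets $Q$ (reachable vertices) and $S$ (recurrent vertices) together with a feasible solution of the associated linear program. One guesses $Q$ and $S$, builds the constraint system in polynomial time, and checks feasibility of the objective $r_l(\osigma)-\sum_p\ogamma_p\geq 1-1/n$ by linear programming, all in polynomial time. Hence the entire difficulty lies in \textbf{NP-hardness}, which I would obtain by a polynomial reduction from (3-)SAT, following the template of the leader-equilibrium hardness proof of \cite{6940380} but adapted both to the restricted reward set $\{0,1\}$ and to the incentive setting.

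For the construction, from a formula $\varphi$ with variables $x_1,\dots,x_m$ and clauses $c_1,\dots,c_k$ I would build an MMPG with one follower per variable, each controlling a choice gadget that selects a truth value, together with clause-checking structure, and a leader ``spine'': a distinguished cycle on which the leader collects reward $1$ on every edge but one, so that the intended play has leader raw payoff exactly $1-1/n$, with all weights in $\{0,1\}$. The clause gadgets are wired so that a variable-follower has a profitable unilateral deviation precisely when its encoded assignment falsifies some clause it guards. If $\varphi$ is satisfiable, the leader fixes a satisfying assignment; then no follower can improve by deviating, so the resulting zero-incentive leader strategy profile is already an incentive strategy profile with leader reward $\geq 1-1/n$. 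Since every leader strategy profile is an incentive strategy profile (Theorem~\ref{theo:superior}), this value lower-bounds the incentive-equilibrium value, which settles the easy direction.

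The \textbf{main obstacle} is the converse in the incentive setting: on an \emph{unsatisfiable} $\varphi$, I must rule out that the extra freedom of paying incentives lets the leader reach $1-1/n$. This is delicate exactly because incentives strictly enlarge the leader's options and could, a priori, let her bribe a follower whose clause is violated instead of honestly satisfying it. The quantitative handle is that incentives are deducted from the leader's own payoff, so the threshold leaves a spendable slack of only $1/n$. I would calibrate the gadgets so that whenever a clause is unsatisfied, the guarding follower's punishment game $\mathcal{G}_p$ has value $r_p(v)$ exceeding the spine reward by a fixed margin; by Lemma~\ref{lem:done} (equivalently, constraint~5), stabilising that follower forces an average incentive of at least this margin, which I set to exceed $1/n$. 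Then any ISP staying on the high-reward spine for an unsatisfiable formula must either admit a deviation (and fail to be an ISP) or pay incentives averaging more than $1/n$, pushing the leader's net payoff below $1-1/n$; and leaving the spine already costs more than $1/n$ in raw reward. This shows the threshold is attainable by an incentive equilibrium iff $\varphi$ is satisfiable, which together with NP membership yields NP-completeness.
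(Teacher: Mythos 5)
Your proposal follows essentially the same route as the paper: NP membership by guessing $Q$ and $S$ and solving the associated linear program, and NP-hardness by a 3SAT reduction adapted from the leader-equilibrium construction of \cite{6940380}, with the decisive new observation---that incentives are deducted from the leader's payoff, so stabilising a follower whose punishment value exceeds his on-play reward consumes the leader's slack of $1/n$---playing the same role it plays in the paper's argument via Lemma~\ref{lem:done}. The gadgets differ only in implementation detail (the paper uses one follower per literal, a three-phase structure with a sink-vertex threat, and obtains the threshold as raw reward $1$ minus incentives of $1/n$, whereas you use one follower per variable and a spine with raw leader reward $1-1/n$), not in method.
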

\begin{proof}[of Theorem~\ref{NPhard1}]
The proof closely relates to the NP hardness proof for leader equilibrium in mean-payoff games from \cite{6940380}.
We consider reduction of the 3SAT satisfiability formula over $n$ atomic propositions with $m$ conjuncts to solve a MMPG, in order to establish NP Hardness. We assume the game graph has $2n+1$ players and $5m+4n+2$ vertices with payoffs 0 and 1 only. 
For reduction, we consider an example of a 3SAT formula $C_1 \wedge C_2 \wedge C_3$ with $C_1 = p \vee q \vee \neg r$, $C_2 = p \vee \neg q \vee \neg r$, and $C_3 = \neg p \vee q \vee r$. We have $2n$ players for the $2n$ literals that corresponds to the 
$n$ variables and there is one leader player. The game consists of three phases -- an initial assignment phase in which leader would make either truth or false assignment to the literal players. Then, we have a validation phase, in which leader intuitively tries to validate the 3SAT formula as per the assignment done in assignment phase. In the third evaluation phase, payoffs are rewarded to every player including leader. Assignment phase would consist of $2n$ literal vertices and $m$ leader vertices as leader chooses
literals corresponding to $n$ variables for the formula assignment. In Validation phase, there are 3 vertices in every conjunct of the 3SAT formula, i.e., $3m$ vertices. In the evaluation phase, we again have $2n$ literal vertices and a leader vertex where game goes round in a cycle of length $n$. At every point in this cycle, payoffs are given to the players and the leader.
There is additionally one sink vertex in the game graph, that has only one outgoing edge to itself. The sink vertex has a payoff of 1 for all literal players but a payoff of 0 for the leader. 
Game is terminated at sink vertex.
It is in the evaluation phase that by choosing the payoff for the players, it can be decided whether there exists an incentive equillibrium in the game with payoff of $1$ for the leader.

Starting in the assignment phase, for each of the $m$ conjuncts, there are $m$ leader vertices. For each conjunct, leader would select a literal vertex for each variable. Like, for a variable $'Z'$, leader would either choose an assignment $'z'$ or $'\neg z'$. 
The selected literal vertex then has two choices -- to continue the game by going to the next leader vertex or to termintae the game by choosing to go to sink vertex. 
The sink vertex has only one self loop that has only one outgoing edge with payoff of $1$ for all literal players and a payoff of $0$ for the leader. 
If literal vertex chooses to go to next leader vertex, leader would go with further assignments in the asignment phase. 

In the validation phase, leader intuitively tries to validate whether the 3SAT formula is satisfiable or not according to the chosen assignments in the assignment phase. For each conjunct and for each variable $'Z'$, leader either goes to $'z'$ where literal 
$'\neg z'$ receives a payoff of 0 and every other player and leader would receive a payoff of 1. Here, also, at every literal vertex, player may opt to continue the game by going to the next leader vertex or to terminate the game by going to the sink vertex.
In our example formula, the formula is satisfiable if leader in the assignment phase would select literals $p$, $\neg q$ and $r$. The incentive equilibrium would be a path $\big(p,\neg q,r)^\omega$. 
If the formula is not satisfiable, any run might have to path by both $'z'$ and $'\neg z'$ for a conjunct and a literal player at any point who receives a payoff of 0 might terminate the game by going to sink vertex -- that results in leader receiving a payoff 
of 0. Thus, for the strategy profiles that end up in the sink vertex and for the unsatisfiable formulae, an incentive equilibrium would have a payoff of 0 for the leader. In the validation phase, if a literal player deviates to sink vertex, all other player receive a payoff of 1. Leader, therefore, incentivise all remaining players to form a coalition and act against the deviating player. Leader can promise to pay an incentive of $1/n$ to every other player in the game.

While, if the formula is satisfiable, game futher goes to the evaluation phase, where nodes are owned by the leader.
Here, for a variable $'Z'$, leader either moves to $'z'$, where a payoff of $1$ is given to every player but 0 is given to $'\neg z'$or leader goes to $'\neg z'$ where a payoff of 1 is given to $'\neg z'$ and all other players, while a payoff of 0 is given to $'z'$.
Additionally, an incentive of $1/n$ is given to all other players. The leader's payoff in any incentive equilibrium, therefore,
equals to $1-1/n$, that is better than the payoff of 0 at the sink vertex for the
leader.  
The proof is now complete.
\end{proof}

\section{Secure $\varepsilon$ incentive strategy profiles}
Based upon the discussion from Section~\ref{sec:security}, we state the following theorem.

\begin{theorem}
	\label{theo:spe}
	We can obtain a secure subgame perfect $\varepsilon$ incentive equilibrium
	$(\osigma, \ogamma)$. 
	\end{theorem}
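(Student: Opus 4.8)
The plan is to start from an incentive equilibrium $(\osigma, \ogamma)$ that is a well behaved PSP and an ISP---whose existence and optimality are guaranteed by the constraint-system construction, Theorem~\ref{theo:wbCSSPs} together with Corollary~\ref{cor:wellBehavedSP}---and to perturb only its incentives. Concretely, I would define a new incentive profile $\ogamma'$ that agrees with $\ogamma$ on every deviator history (so the punishment machinery of the PSP is left untouched) but adds the constant $\frac{\varepsilon}{|P|}$ to the on-path incentive of every follower, i.e.\ $\ogamma'_p = \ogamma_p + \frac{\varepsilon}{|P|}$ on the prefixes of $\pi_\osigma$. Since the added payment is a constant per step, its limit-average contribution is exactly $\frac{\varepsilon}{|P|}$, so $(\osigma,\ogamma')$ is again a well behaved PSP defining the same play $\pi_\osigma$.

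The three properties to establish are stability with \emph{strict} preference, $\varepsilon$-optimality, and secure subgame perfection. For the first, recall from Lemma~\ref{lem:done} that in the original ISP every follower $p$ satisfies $\ogamma_p + r_p(\osigma) \geq \max_{v \in Q} r_p(v)$, and that after a deviation at a vertex $v$ the deviator's payoff is bounded above by $r_p(v) \leq \max_{v \in Q} r_p(v)$. Boosting the on-path incentive gives $\ogamma'_p + r_p(\osigma) > \max_{v \in Q} r_p(v)$, so every follower now strictly prefers compliance to any deviation; in particular $(\osigma,\ogamma')$ remains an ISP. Security is then immediate: the secure condition requires that upon unilateral deviation a follower either loses strictly or, when indifferent, harms no other player; since we have arranged a strict loss for every deviating follower, the first clause always applies and the second clause never needs to be checked. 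This is precisely why the $\varepsilon$-relaxation buys security cheaply, whereas for leader and Nash equilibria (cf.\ Figure~\ref{fig:secure}) it can be costly.

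For $\varepsilon$-optimality I would observe that the leader's only change is the additional payment $(|P|-1)\cdot\frac{\varepsilon}{|P|} < \varepsilon$, so her total payoff $r_l(\osigma) - \sum_{p \in P\smallsetminus\{l\}}\ogamma'_p(\osigma)$ falls by strictly less than $\varepsilon$ relative to the optimal incentive equilibrium. As the latter is maximal among all ISPs, any profile $(\osigma',\ogamma')$ satisfies $r_l(\osigma') - \sum_{p}\ogamma'_p(\osigma') \leq r_l(\osigma) - \sum_{p}\ogamma_p(\osigma)$, and combining these two inequalities shows $(\osigma,\ogamma')$ is an $\varepsilon$-incentive equilibrium. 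For subgame perfection I would split reachable subgames into two kinds. On equilibrium-path subgames the continuation is a tail of $\pi_\osigma$ with the same boosted constant incentives, so the strict-preference argument above shows each such subgame is a secure ISP. On deviator subgames the PSP already incentivises every punishing follower $p'$ so that $r_{p'}(v,v') + \ogamma_{p'}(h \cdot v') = r_{\max}+1$, a mean payoff strictly exceeding any reward attainable by a further deviation (bounded by $r_{\max}$); hence each punishment subgame is also a secure ISP, and strictness again rules out the indifference clause.

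The main obstacle I anticipate is the bookkeeping around the limit-average semantics in the deviator subgames: one must verify that the per-step punishment incentive $r_{\max}+1 - r_{p'}(v,v')$ indeed yields a limit-average value of $r_{\max}+1$ for each punisher along the punishment play, so that strict preference (and hence security) survives the passage to the $\liminf$ that defines the payoffs. Once this is confirmed, every reachable subgame is a secure incentive strategy profile and the profile is $\varepsilon$-optimal for the leader, so $(\osigma,\ogamma')$ is the desired secure subgame perfect $\varepsilon$ incentive equilibrium.
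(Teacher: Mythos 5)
Your proposal is correct and follows essentially the same route as the paper's proof: start from the optimal well-behaved PSP that is an ISP (via Theorem~\ref{theo:wbCSSPs}), raise each follower's on-path incentive by $\frac{\varepsilon}{|P|}$ so that compliance becomes strictly preferred, and observe that the deviator subgames are untouched and already secure by the punishment scheme. Your added explicit check that the leader loses only $(|P|-1)\cdot\frac{\varepsilon}{|P|} < \varepsilon$ is a detail the paper leaves implicit, but it is the same argument.
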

	
	\begin{proof}
		Using Theorem \ref{theo:wbCSSPs}, we can produce a well behaved incentive
		equilibrium, which is also a PSP. 
		Re-visiting the proof of Lemma \ref{lem:enough}, this PSP satisfies the
		requirements of a secure incentive equilibrium in every subgame that starts in a
		deviating history. 
		
		For non-deviating histories, however, we have that no follower benefits from
		deviation, but will normally lack the security property  
		(e.g., in the example from Figure~\ref{fig:incentiveisbetter}).
		We now produce a new PSP $(\osigma, \ogamma')$, such that $\ogamma'$
		is obtained from $\ogamma$ by selecting $\ogamma_p' = \ogamma_p +
		\frac{\varepsilon}{|P|}$ for all followers. 
		The subgames that start in a deviating history are not affected by this change,
		such that the resulting PSP also satisfies the requirements of a secure
		incentive equilibrium from these positions. 
		For non-deviating histories, however, we have now increased the value for
		following slightly, such that the pre-requisite of secure equilibria is
		satisfied here, too. (The deviating follower would strictly decrease his
		reward.) 
	\end{proof}


\begin{thebibliography}{10}
  
\bibitem{mmpgsolver}
\newblock {\textsc{MMPGSolver}}.
\newblock {\url{http://www.cse.iitb.ac.in/~trivedi/mmpgsolver/}}.

\bibitem{BjorklundVorobyov/07/subexp}
H.~Bj{\"o}rklund and S.~Vorobyov.
\newblock A combinatorial strongly subexponential strategy improvement
  algorithm for mean-payoff games.
\newblock {\em Discrete Applied Math.}, 155(2):210--229, 2007.

\bibitem{DBLP:conf/icalp/BorosEFGMM11}
E.~Boros, K.~M.~Elbassioni, M.~Fouz, V.~Gurvich, K.~Makino, and B.~Manthey.
\newblock Stochastic mean-payoff games: Smoothed analysis and approximation
  schemes.
\newblock In {\em Proceedings of ICALP 2011},
  LNCS 6755, pages 147--158,
  2011.

\bibitem{DBLP:conf/csr/BrihayeBP10}
T.~Brihaye, V.~Bruy{\`e}re, and J.~{De Pril}.
\newblock Equilibria in quantitative reachability games.
\newblock In {\em Proceedings of CSR 2010}, LNCS 6072, pages 72--83, 2010.

\bibitem{BPS/13/simpleEquilibria}
T.~Brihaye, J.~{De Pril}, and S.~Schewe.
\newblock Multiplayer Cost Games with Simple Nash Equilibria.
\newblock In {\em Proceedings of LFCS 2013},
  LNCS 7734, pages 59--73.
  2013.

\bibitem{DBLP:journals/fmsd/BrimCDGR11}
L.~Brim, J.~Chaloupka, L.~Doyen, R.~Gentilini, and J.-F.~Raskin.
\newblock Faster algorithms for mean-payoff games.
\newblock {\em Formal Methods in System Design}, 38(2):97--118, 2011.

\bibitem{DBLP:conf/lics/ChatterjeeHJ05}
K.~Chatterjee, T.~A. Henzinger, and M.~Jurdzinski.
\newblock Mean-payoff parity games.
\newblock In {\em Proceedings of LICS 2005}, pages 178--187,
  2005.

\bibitem{Chatterjee:2006:GSE:1226608.1226612}
K.~Chatterjee, T.~A. Henzinger, and M.~Jurdzinski.
\newblock Games with Secure Equilibria.
\newblock In {\em Theor.\ Comput.\ Sci.}, pages 67--82,
  2006.

\bibitem{positional/1979}
A.~Ehrenfeucht and J.~Mycielski.
\newblock Positional strategies for mean payoff games.
\newblock {\em International Journal of Game Theory}, 8(2):109--113, 1979.

\bibitem{DBLP:journals/siamcomp/EtessamiY10}
Kousha Etessami and Mihalis Yannakakis.
\newblock On the complexity of Nash equilibria and other fixed points.
\newblock {\em {SIAM} J. Comput.}, 39(6):2531--2597, 2010.

\bibitem{6940380}
A.~Gupta and S.~Schewe.
\newblock Quantitative Verification in Rational Environments.
\newblock In {\em Proc.\ of TIME}, pages 123--131, 2014.

\bibitem{GS/14/bimatrix}
A.~Gupta and S.~Schewe.
\newblock It Pays to Pay in Bi-Matrix Games -- a Rational Explanation for Bribery.
\newblock In {\em Proc.\ of AAMAS}, pages 1361--1369, 2015.

\bibitem{Henzinger13}
T.~A.~Henzinger.
\newblock Quantitative reactive modeling and verification.
\newblock {\em Computer Science - R{\&}D}, 28(4):331--344, 2013.

\bibitem{Jurdzinski/98/UP}
M.~Jurdzi{\'n}ski.
\newblock Deciding the winner in parity games is in {UP}~$\cap$~co-{UP}.
\newblock {\em Information Processing Letters}, 68(3):119--124, November 1998.

\bibitem{Karmarkar/84/Karmarkar}
N.~Karmarkar.
\newblock A new polynomial-time algorithm for linear programming.
\newblock In {\em Proc.\ of STOC}, pages 302--311, 1984.

\bibitem{Khachian/79/Elliptic}
L.~G.\ Khachian.
\newblock A polynomial algorithm in linear programming.
\newblock {\em Dokl.\ Akad.\ Nauk SSSR}, 244:1093--1096, 1979.

\bibitem{Nash01011950}
J.~F.\ Nash.
\newblock Equilibrium points in n-person games.
\newblock {\em Proceedings of the National Academy of Sciences}, 36(1):48--49,
  1950.

\bibitem{Osborne1994}
M.~J.\ Osborne and A.~Rubinstein.
\newblock {\em A course in game theory}.
\newblock The MIT Press, Cambridge, USA, 1994.
\newblock electronic edition.

\bibitem{Put94}
M.~L. Puterman.
\newblock {\em {M}arkov Decision Processes: Discrete Stochastic Dynamic
  Programming}.
\newblock Wiley, 1994.

\bibitem{PR}
A.~Pnueli and R.~Rosner.
\newblock On the synthesis of a reactive module.
\newblock In {\em Proceedings of {POPL}}, pages 179--190, 1989.

\bibitem{RW}
P.~J.~G.\ Ramadge and W.~M.\ Wonham.
\newblock The control of discrete event systems.
\newblock {\em Proc.\ of the IEEE}, 77(2):81--98, 1989.


\bibitem{Schewe/08/improvement}
S.~Schewe.
\newblock An optimal strategy improvement algorithm for solving parity and
  payoff games.
\newblock In {\em Proc.\ of CSL 2008}, LNCS 5213,
  pages 368--383, 2008.

\bibitem{Schewe/09/games2LP}
S.~Schewe.
\newblock From parity and payoff games to linear programming.
\newblock In {\em Proc.\ of MFCS 2009}, LNCS 5734,
  pages 675--686, 2009.

\bibitem{DBLP:conf/fsttcs/Ummels06}
M.~Ummels.
\newblock Rational behaviour and strategy construction in infinite multi-player
  games.
\newblock In {\em Proc.\ of FSTTCS 2006}, pages 212--223, 2006.

\bibitem{DBLP:conf/fossacs/Ummels08}
M.~Ummels.
\newblock The complexity of nash equilibria in infinite multi-player games.
\newblock In {\em Proc.\ of FoSSaCS 2008}, LNCS 4962, pages 20--34, 2008.

\bibitem{DBLP:conf/concur/UmmelsW11}
M.~Ummels and D.~Wojtczak.
\newblock The Complexity of Nash Equilibria in Limit-Average Games.
\newblock In {\em Proc.\ of CONCUR 2011}, pages 482-496, 2011.

\bibitem{Zwick+Paterson/96/payoff}
U.~Zwick and M.~S.\ Paterson.
\newblock The complexity of mean-payoff games on graphs.
\newblock {\em Theoretical Computer Science}, 158(1--2):343--359, 1996.

\bibitem{von1934marktform}
H.~von Stackelberg.
\newblock {\em Marktform und Gleichgewicht}.
\newblock J.\ Springer, 1934.

\bibitem{friedman1977oligopoly}
J.~W.\ Friedman.
\newblock {\em Oligopoly and the theory of games}.
\newblock  North-Holland Pub.\ Co., 1977.
\newblock  Advanced textbooks in economics.
  
\bibitem{Friedman71Noncooperative}
J.~W. Friedman.
\newblock A Non-cooperative Equilibrium for Supergames.
\newblock In {\em The Review of Economic Studies}, pages 1--12, 1971.

\end{thebibliography}
\end{document}